\pdfoutput=1 
\documentclass[a4paper,english,cleveref,autoref,thm-restate]{lipics-v2019}


\usepackage{physics}
\usepackage{amsmath}
\usepackage{tikz}
\usepackage{mathdots}
\usepackage{yhmath}
\usepackage{cancel}
\usepackage{color}
\usepackage{siunitx}
\usepackage{array}
\usepackage{multirow}
\usepackage{amssymb}
\usepackage{gensymb}
\usepackage{tabularx}
\usepackage{booktabs}
\usetikzlibrary{fadings}
\usetikzlibrary{patterns}
\usetikzlibrary{shadows.blur}
\usetikzlibrary{shapes}

\newcommand{\eq}{\textsc{equality}}
\newcommand{\hide}[1]{ }
\newcommand{\acc}{\mathsf{acc}}

\newcommand{\symmetry}{\textsc{symmetry}}
\newcommand{\amos}{\textsc{amos}}
\newcommand{\twocoleq}{\textsc{2-col-eq}}

\newcommand{\id}{\ensuremath{\mathsf{id}}}

\newcommand{\prot}{\mathsf{prot}}
\newcommand{\M}{\mbox{\rm\textsf{M}}}
\newcommand{\MA}{\mbox{\rm\textsf{MA}}}
\newcommand{\A}{\mbox{\rm\textsf{A}}}
\newcommand{\dM}{\mbox{\rm\textsf{dM}}}
\newcommand{\dAM}{\mbox{\rm\textsf{dAM}}}
\newcommand{\dMA}{\mbox{\rm\textsf{dMA}}}
\newcommand{\dMAM}{\mbox{\rm\textsf{dMAM}}}
\newcommand{\dAMA}{\mbox{\rm\textsf{dAMA}}}
\newcommand{\dAMAM}{\mbox{\rm\textsf{dAMAM}}}
\newcommand{\cL}{\mathcal{L}}
 \newcommand{\completeness}{\ensuremath{\mathbf{Completeness: }}}
 \newcommand{\soundness}{\ensuremath{\mathbf{Soundness: }}} 
\newcommand{\lipics}[1]{\textcolor{lipicsGray}{\textbf{\textsf{#1}}}}

\newcommand{\sym}{\mathrm{sym}}
\newcommand{\priv}{\mathrm{p}}
\newcommand{\pub}{\mathrm{s}}

\newcommand{\poly}{\mathrm{poly}}
\newcommand{\eps}{\varepsilon}

\newcommand{\cO}{\mathcal{O}}


\bibliographystyle{plainurl}

\title{Shared v\/s Private Randomness in Distributed Interactive Proofs}

\titlerunning{Randomness in Distributed Interactive Proofs} 

\author{Pedro Montealegre\footnote{Corresponding author}}{Facultad de Ingenier\'{\i}a y Ciencias, Universidad Adolfo Ib\'a\~nez, Chile}{p.montealegre@uai.cl}{}{}
\author{Diego Ram\'{\i}rez-Romero}{Departamento de Ingenier\'{\i}a Matem\'atica, Universidad de Chile, Chile}{dramirez@dim.uchile.cl}{}{}
\author{Ivan Rapaport}{DIM-CMM (UMI 2807 CNRS), Universidad de Chile, Chile}{rapaport@dim.uchile.cl}{}{}

\authorrunning{P. Montealegre, D. Ram\'{\i}rez and I. Rapaport} 

\Copyright{P. Montealegre, D. Ram\'{\i}rez and I. Rapaport} 

\ccsdesc[100]{Theory of computation~Distributed computing models} 
\ccsdesc[100]{Theory of computation~Interactive proof systems}
\ccsdesc[100]{Theory of computation~Distributed algorithms}

\keywords{Distributed interactive proofs, Distributed verification, Shared randomness, Private randomness} 




\funding{Partially supported by CONICYT via PIA/ Apoyo a Centros Cient\'{\i}ficos y Tecnol\'ogicos de Excelencia AFB 170001 (P.M. and I.R.), FONDECYT 1170021 (D.R. and I.R.) and FONDECYT 11190482 (P.M.) and PAI + Convocatoria Nacional Subvenci\'on a la Incorporaci\'on en la Academia A\~no 2017 + PAI77170068 (P.M.).}

\nolinenumbers 

\hideLIPIcs  

\EventEditors{}
\EventNoEds{2}
\EventLongTitle{31st International Symposium on Algorithms and Computation (ISAAC 2020)}
\EventShortTitle{ISAAC 2020}
\EventAcronym{ISAAC}
\EventYear{2020}
\EventDate{December 14--18, 2020}
\EventLocation{Hong Kong}
\EventLogo{}
\SeriesVolume{}
\ArticleNo{}

\begin{document}
\maketitle

\begin{abstract}

In distributed interactive proofs, the nodes of a graph G interact with a powerful but untrustable prover who tries to convince them, in a small number of rounds and through short messages, that G satisfies some property. This series of interactions is followed by a phase of distributed verification, which may be either deterministic or randomized, where nodes exchange messages with their neighbors.

The nature of this last verification round defines the two types of interactive protocols. We say that the protocol is of Arthur-Merlin type if the verification round is deterministic. We say that the protocol is of Merlin-Arthur type if, in the verification round, the nodes are allowed to use a fresh set of random bits.

In the original model introduced by Kol, Oshman, and Saxena [PODC 2018], the randomness was private in the sense that each node had only access to an individual source of random coins. Crescenzi, Fraigniaud, and Paz [DISC 2019] initiated the study of the impact of shared randomness (the situation where the coin tosses are visible to all nodes) in the distributed interactive model.

In this work, we continue that research line by showing that the impact of the two forms of randomness is very different depending on whether we are considering Arthur-Merlin protocols or Merlin-Arthur protocols. While private randomness gives more power to the first type of protocols, shared randomness provides more power to the second. Our results also connect shared randomness in distributed interactive proofs with distributed verification, and new lower bounds are obtained.

\end{abstract}

\section{Introduction}\label{sec:introduction}
Distributed decision refers to the task in which the nodes of a connected graph $G$ have to collectively decide whether $G$ satisfies some graph property~\cite{naor1995can}. For performing any such task,
the nodes exchange messages through the edges of $G$. The input of distributed decision problems may also include labels given to the nodes and/or to the edges of $G$. For instance, the nodes could decide whether $G$ is properly colored, or decide whether the weight of the minimum spanning tree lies below some threshold.
 
Acceptance and rejection are defined as follows. 
If $G$ satisfies the property, then all nodes must accept; otherwise, at least one node must reject~\cite{korman2010proof}.
This type of algorithms could be used in distributed fault-tolerant computing, where the nodes, with some regularity,
must check whether the current network configuration is in a legal state for some Boolean predicate~\cite{fraigniaud2019randomized}. Then, if the configuration becomes illegal at some point, the rejecting node(s) raise the alarm or launch a recovery procedure.

Deciding whether a given coloring is proper can be done locally, by exchanging messages between neighbors. These types of properties
are called {\emph{locally decidable}}. Nevertheless, some other properties, such as deciding whether $G$ is a tree, are not. 
 As a remedy, the notion of {\emph{proof-labeling scheme}} (PLS) was introduced~\cite{korman2010proof}. Similar variants were also introduced: non-deterministic local decisions~\cite{fraigniaud2013towards}, locally checkable proofs~\cite{goos2016locally}, and others.
 
 Roughly speaking, in all these models, a powerful prover gives to every node $v$ a certificate $c(v)$.
This provides $G$ with a global distributed-proof. Then, every node $v$ performs a
local verification using its local information together with $c(v)$. PLS can be seen as a distributed counterpart to the class NP,
where, thanks to nondeterminism, the power of distributed algorithms increases.

Just as it happened in the centralized framework~\cite{goldreich1991proofs,goldwasser1989knowledge},
a natural step forward is to consider a model where the nodes are allowed to have more than one interaction with the prover. 
In fact, with the rise of the Internet, prover-assisted computing models are more relevant than ever.
We can think of asymmetric applications like Facebook, where, together with the social network itself,
there is a very powerful central entity that stores a large amount of data (the topology of the network, preferences, and activities of the users, etc.). 
Or we can consider Cloud Computing, where computationally limited devices delegate costly computations to a cloud with tremendous computational power. 
The central point lies in the fact that these devices may not trust their cloud service (as it may be malicious, selfish, or buggy). Therefore, the nodes must regularly verify
the correctness of the computation performed by the cloud service.

Interestingly, there is no gain when interactions are all deterministic. When there is no randomness, the prover, from the very beginning, has all the information required to simulate the interaction with the nodes. Then, in just one round, he could simply send to each node the transcript of the whole communication, and the nodes simply verify that the transcript is indeed consistent. A completely different situation occurs when the nodes have access to some kind of randomness~\cite{baruch2015randomized,fraigniaud2019randomized}. In that case, the exact interaction with the nodes is unknown to the prover until the nodes communicate the realization of their random variables. Adding a randomized phase to the non-deterministic
phase gives more power to the model~\cite{baruch2015randomized,fraigniaud2019randomized}.

 Two model variants arise in this new randomized scenario, regarding the order of the phases.
Assume that we have two phases. When the random phase precedes the non-deterministic phase, we refer to 
{\emph{distributed Arthur-Merlin protocols}}, and we denote them by \dAM\ (following the terminology and notation of~\cite{kol2018interactive}).
Conversely, when nodes access randomness only after receiving the certificates,
we refer to {\emph{distributed Merlin-Arthur protocol}}s, and we denote them by \dMA. 
Note that Merlin is the powerful but untrustable prover of the PLS model,
while Arthur represents the nodes, which are simple and limited verifiers that can flip coins.

In a \dMA\ protocol, the prover does not see the nodes' randomness when choosing the certificates. Instead, only once the prover assigns certificates to the nodes, each node randomly selects a message that broadcasts to its neighbors. Then, each node decides whether to accept or reject, based on its randomness, input, certificate, and the messages it received from its neighbors.

These definitions can be easily extended to a more general setting~\cite{crescenzi2019trade}, where the number of interactions between Arthur and Merlin is constant but not fixed to only one interaction per player. This model was introduced in~\cite{kol2018interactive} and further studied in~\cite{crescenzi2019trade,fraigniaud2019distributed,naor2020power}. For instance, a \dMAM\ protocol involves three interactions: Merlin provides a certificate to Arthur, then Arthur queries Merlin by sending a random string. Finally, Merlin replies to Arthur's query by sending another certificate. Recall that this series of interactions is followed by a phase of distributed verification performed between every node and its neighbors. When the number of interactions is $k$ we refer to $\dAM[k]$ protocols (if the last player is Merlin) and $\dMA[k]$ protocols (otherwise). For instance, $\dAM[2] = \dAM$, $\dMA[3] = \dAMA$, etc. Also, the scenario of distributed verification, where there is no randomness and only Merlin interacts, corresponds to $\dAM[1]$, which we denote by \dM. In other words, \dM\ is the PLS model.

In distributed interactive proofs, Merlin tries to convince the nodes that $G$ satisfies some property in a small number of rounds and through short messages. We say that an algorithm uses $\cO(f(n))$ bits if the messages exchanged between the nodes (in the verification round) and also the messages exchanged between the nodes and the prover are upper bounded by $\cO(f(n))$. We include this {\emph{bandwidth bound}} in the notation, which becomes $\dMA[k,f(n)] $ and $\dAM[k,f(n)]$
for the corresponding protocols.

In this article we cope with an important issue, well-studied in the context of communication complexity, but much less 
considered in distributed computing, related to the visibility of the coins: they can be either shared or 
private~\cite{babai1997randomized,becker2014simultaneous,fischer2016public,KNR99,newman1996public}.
The theory of distributed decision has restricted itself to private randomness, 
in the sense that each node has only access to a private source of random coins. 
These coins are shared with the prover but remain private to the other nodes. We explore the role of \emph{shared randomness}, that is, the situation on which the same set of random bits is produced on every node. The issue of shared randomness in distributed interactive proofs was explicitly formulated by Naor, Parter, and Yogev~\cite{naor2020power}. It is also expressly addressed in Crescenzi, Fraigniaud, and Paz~\cite{crescenzi2019trade}.

For distinguishing the two types of randomness, we denote the private randomness setting by $\dAM^{\priv}[k,f(n)]$, and the shared randomness setting by $\dAM^{\pub}[k,f(n)]$.
Also, as explained before, we omit the number of interactions $k$ when they are 2. For instance, we denote $\dAM^{\priv}[2,f(n)]$ simply by $\dAM^{\priv}[f(n)]$.

Some distributed problems are hard, even when a powerful prover provides the nodes with certificates.
It is the case of \symmetry, the language of graphs having a non-trivial automorphism (i.e., a non-trivial one-to-one mapping from the set of nodes to itself preserving edges). Any proof labelling scheme recognizing \symmetry\ requires certificates of size $\Omega(n^2)$~\cite{goos2016locally}.

Many problems requiring $\Omega(n^2)$-bit certificates in any PLS, such as \symmetry, admit distributed interactive protocols with small certificates, and very few interactions. In fact, 
$\symmetry$ is in both $\dMAM^{\priv}[\log n]$ and $\dAM^{\priv}[n \log n]$~\cite{kol2018interactive}.
Moreover, ${\overline{\symmetry}}$ (i.e. the languages of graphs not having a non-trivial automorphism) belongs to $\dAMAM^{\priv}[\log n]$~\cite{naor2020power}.

In~\cite{crescenzi2019trade}, the authors explore the role of shared randomness in distributed interactive proofs.
They prove that private randomness does not limit the power of Arthur-Merlin
protocols compared to shared randomness, up to a small additive factor in the certificate
size. Roughly, they show that, if ${\mathcal L} \in \dAM^{\pub}[k,f(n)]$, 
then ${\mathcal L} \in \dAM^{\priv}[k,f(n)+ \log n]$. 

We deepen this study by finding explicit inclusions and separations between models.

\subsection{Our Results}

In Section \ref{sec:lim} we show that any interactive protocol using shared randomness
 can be derandomized into a non-interactive proof, with an exponential-factor overhead in the bandwidth. 
Roughly, we prove that, 
if $\mathcal{L} \in \dAM^{\pub}[k, f(n)]$, then $\mathcal{L} \in \dM(2^{O(k\:f(n))}+ \log n)$.
From this we conclude many lower bounds. For instance, we can conclude that  \textsc{symmetry} $\in \dAM^{\pub}[k,\Omega(\log n)]$, for any fixed $k$. This result is tight,
because it is already known that \textsc{symmetry} $\in \dMAM^{\pub}[\log n]$ (in fact, it is known that \textsc{symmetry} $\in \dMAM^{\priv}[\log n]$~\cite{kol2018interactive}, 
but the private coin protocol can be easily adapted to work with shared randomness).

Later, in Section \ref{sec:am}, we separate the models with private and shared randomness through the language \amos, which is the language of labeled graphs having at most one selected node. More precisely, \amos\ is the language of $n$-node graphs with labels in $\{0,1\}$, and where at most one vertex is labeled $1$. In \cite{fraigniaud2019distributed} it is shown $\amos$ is \emph{easy} for private-coin Arthur-Merlin protocols, as $\amos \in \dAM^{\priv}[1]$. We prove that $\textsc{amos} \in \dAM^{\pub}[k, \Theta(\log \log n)]$
and hence there exists an unbounded gap between the two models.

Interestingly, regarding private and shared randomness, roles are reversed when we address \dMA\ protocols
instead of \dAM\ protocols. In fact, in Section \ref{sec:ma}, we get an analogous result to that in~\cite{crescenzi2019trade} 
by proving that \dMA\ protocols with shared randomness are more powerful than \dMA\ protocols with private randomness.
More precisely, if ${\mathcal L} \in \dMA^{\priv}_\eps[f(n)]$, then  
${\mathcal L} \in \dAM^{\pub}_{\eps + \delta}[f(n) + \log n + \log (\delta^{-1})]$. We then separate the two classes. We introduce another language denoted \twocoleq, which consists of graphs with $n$-bit labels corresponding to proper 2-colorings. In other words, the language consists of bipartite graphs where
each part is colored with an $n$-bit label. We show that \twocoleq\ separates shared and private randomness on distributed Merlin-Arthur protocols. More precisely, we show first that $\twocoleq \in \dMA^{\pub}[\log n]$. Then, we show that, for $\eps<1/4$, $\twocoleq \in \dAM_{\eps}^{\priv}[\Theta(\sqrt{n})]$.

\subsection{Related Work}

The study of the role of shared and private randomness in distributed interactive was initiated very recently~\cite{crescenzi2019trade}. With respect to the case $\dAM^{\pub}[2]=\dAM^{\pub}$, the authors show that any Arthur-Merlin protocol for both \symmetry\ and ${\overline{\symmetry}}$ must have certificates and messages of size $\Omega(\log \log n)$. Note that this is stronger than just saying $\symmetry, {\overline{\symmetry}} \notin \dAM^{\pub}(o(\log\log n))$. On the positive side, in~\cite{crescenzi2019trade} the authors show that, in the $\dMA^{\pub}$ model, shared randomness helps significantly if we want to decide whether a graph has no triangles. In fact, the language of triangle-free graphs belongs to $\dMA^{\pub}[\sqrt{n}\log n]$ while any PLS requires certificates of size $n/e^{\mathcal{O}(\sqrt{\log n})}$.

By contrast, the issue of private versus shared randomness has been intensively addressed in the communication complexity framework. More precisely, in the Simultaneous Messages Model (\textsf{SM}). This two-player model was already present in Yao's seminal communication complexity paper of 1979~\cite{yao1979some}. 

In the \textsf{SM} model, the two parties are unable to communicate with each other, but, instead, can send a single message to a referee. Yao proved that the message size complexity of \textsc{Eq}, which tests whether two $n$-bit inputs are equal, is $\Theta(n)$ in the deterministic case (in fact he proved that this is also true even if players can communicate back-and-forth). Later, clear separations have been proved between deterministic, private randomness, and shared randomness algorithms. In the shared randomness setting with constant one-sided error, the message size complexity of \textsc{Eq} is $\mathcal{O}(1)$~\cite{babai1997randomized}. On the other hand, for private randomness algorithms of constant one-sided error, the message size complexity is much higher,  $\Theta(\sqrt{n})$~\cite{babai1997randomized,newman1996public}. More generally, Babai and Kimmel~\cite{babai1997randomized} proved that, for any function $f$, the use of private randomness in simultaneous messages might lead to at most a square root improvement.

There are natural ways to extend the \textsf{SM} model to more than two players. This issue is addressed in~\cite{fischer2016public} in the context of the \emph{number-in-hand} model (where each player only knows its own input, there is no input graph $G$ and players broadcast messages in each round). In problem \textsc{AllEq} there are $k$ players, each one receives a boolean vector $\{0,1\}^n$, and they have to decide whether all the $k$ vectors are equal. In problem \textsc{ExistsEq}, the $k$ players have to decide whether there exist {\emph{at least}} two players with the same input. It is not difficult to see that in both the deterministic case and the shared randomness case, the results for two players can be extended to $k$ players (the number of players is irrelevant). The private coin case is more involved than the case of shared randomness. With respect to private coin algorithms of constant error, the authors prove, for problem \textsc{AllEq}, an upper bound of $\mathcal{O}(\sqrt{n/k} + \log(\min(n,k)))$ and a lower bound of $\Omega(\log{n})$. In the case of \textsc{ExistsEq} the upper bound they show is $\mathcal{O}(\log k\sqrt{n})$ while the lower bound is $\Omega(\sqrt{n})$.

\section{Model and Definitions}\label{sec:model-def}

Let $G$ be a simple connected $n$-node graph, let $I:V(G)\to \{0,1\}^*$ be an input function assigning labels to the nodes of $G$, where the size of all inputs is polynomially bounded on $n$. Let $\id:V(G)\to\{1,\dots,\text{poly}(n)\}$ be a one-to-one function assigning identifiers to the nodes. A \emph{distributed language} $\mathcal L$ is a (Turing-decidable) collection of triples $(G,\id,I)$, called \emph{network configurations}. In this paper, we are particularly interested in two languages. The first one, denoted \amos,  is the language of graphs where at most one node is selected. The second language, denoted \twocoleq, consists in graphs with $n$-bit labels corresponding to proper 2-colorings. Formally,

\begin{itemize}
	
\item $\amos ={\Big \{} (G,\id,I) \mid I:V(G) \to \{0,1\} \mbox{ and } |\{v \in V(G): I(v)=1\}| \leq 1 {\Big \}}$,

\medskip

\item $\twocoleq ={\Big \{} (G,\id,I) \mid I:V(G) \to \{0,1\}^n \mbox{ is a proper two-coloring of $G$}{\Big \}}$.

\end{itemize}

Also, we introduce other problems that will be of interest in the following sections: \textsc{simmetry}, \textsc{diameter}, \textsc{planar}, \textsc{outerplanar}, 3-\textsc{col},  \textsc{spanning tree} and $\triangle$-\textsc{free} consisting in, respectively, deciding the existence of a non-trivial automorphism, determining whether the graph has diameter bounded by some threshold, whether the graph is (outer) planar, whether the graph is 3-colorable, whether a set of edges of the graph form a spanning tree, and whether the graph has no triangles (as subgraphs). For simplifying the notation, we denote by  $\llbracket \text{p}(x) \rrbracket$ the function that equals one iff the proposition p$(x)$ is true.

A distributed interactive protocol consists of a constant series of interactions between a {\emph{prover}} called Merlin, and a {\emph{verifier}} called Arthur. The prover Merlin is centralized, has unlimited computing power and knows the complete configuration $(G,\id,I)$. However, he can not be trusted. On the other hand, the verifier Arthur is distributed, represented by the nodes in $G$, and has limited knowledge. In fact, at each node $v$, Arthur is initially aware only of his identity $\id(v)$, and his label $I(v)$. He does not know the exact value of $n$, but he knows that there exists a constant $c$ such that $\id(v) \leq n^c$. Therefore, for instance, if one node $v$ wants to communicate his $\id(v)$ to its neighbors, then the message is of size $\cO(\log n)$.

Given any network configuration $(G, \id, I)$, the nodes of $G$ must collectively decide whether $(G, \id, I)$ belongs to some distributed language ${\mathcal L}$. If this is indeed the case, then all nodes must accept; otherwise, at least one node must reject (with certain probabilities, depending on the precise specifications we are considering).

There are two types of interactive protocols: Arthur-Merlin and Merlin-Arthur. Both types of protocols have two phases: an interactive phase and a verification phase. Let us define first {\emph{Arthur-Merlin interactive protocols}}. If Arthur is the party that starts the interactive phase, he picks a random string $r_1(v)$ at each node $v$ of $ G $ (this string could be either private or shared) and send them to Merlin. Merlin receives $r_1$, the collection of these $n$ strings, and provides every node $v$ with a certificate $c_1(v)$ that is a function of $v$, $r_1$ and $(G,\id,I)$. Then again Arthur picks a random string $r_2(v)$ at each node $v$ of $G$ and sends $r_2$ to Merlin, who, in his turn, provides every node $v$ with a certificate $c_2(v)$ that is a function of $v$, $r_1$, $r_2$ and $(G,\id,I)$. This process continues for a fixed number of rounds. If Merlin is the party that starts the interactive phase, then he provides at the beginning every node $v$ with a certificate $c_0(v)$ that is a function of $v$ and $(G,\id,I)$, and the interactive process continues as explained before. In Arthur-Merlin protocols, the process ends with Merlin. More precisely, in the last, $k$-th round, Merlin provides every node $v$ with a certificate $c_{\lceil k/2\rceil}(v)$. Then, the verification phase begins. This phase is a one-round deterministic algorithm executed at each node. More precisely, every node $v$ broadcasts a message $M_v$ to its neighbors. This message may depend on $\id(v)$, $I(v)$, all random strings generated by Arthur at $v$, and all certificates received by $v$ from Merlin. Finally, based on all the knowledge accumulated by $v$ (i.e., its identity, its input label, the generated random strings, the certificates received from Merlin, and all the messages received from its neighbors), the protocol either accepts or rejects at node $v$. Note that Merlin knows the messages each node broadcasts to its neighbors because there is no randomness in this last verification round.

A {\emph{Merlin-Arthur interactive protocols}} of $k$ interactions is an Arthur-Merlin protocol with $k-1$ interactions, but where the verification round is randomized. More precisely, Arthur is in charge of the $k$-th interaction, which includes the verification algorithm. The protocol ends when Arthur picks a 
random string $r(v)$ at every node $v$ and uses it to perform a (randomized) verification algorithm. In other words, each node $v$ randomly chooses a message $M_v$ from a distribution specified by the protocol, and broadcast $M_v$ to its neighbors. Finally, as explained before, the protocol either accepts or rejects at node $v$. Note that, in this case, Merlin does not know the messages each node broadcasts to its neighbors (because they are randomly generated). If $k=1$, a distributed Merlin-Arthur protocol is a (1-round) randomized decision algorithm; if $k=2$, it can be viewed as the non-deterministic version of randomized decision, etc.

\begin{definition}
Let ${\mathcal V}$ be a verifier and ${\mathcal M}$ a prover of a distributed interactive proof protocol for languages over graphs of $n$ nodes.
If $({\mathcal V}, {\mathcal M})$ corresponds to an Arthur-Merlin (resp. Merlin Arthur) $k$-round, $\cO(f(n))$ bandwidth protocol,
we note $({\mathcal V}, {\mathcal M}) \in {\dAM}_{\prot}[k,f(n)]$ (resp. $({\mathcal V}, {\mathcal M}) \in {\dMA}_{\prot}[k,f(n)]$).
\end{definition}

\begin{definition}
Let $\eps \leq 1/3$. The class $\dAM_{\eps}[k,f(n)]$ (resp. $\dMA_{\eps}[k,f(n)]$) is the class of languages ${\mathcal L}$ over graphs of $n$ nodes
for which there exists a verifier ${\mathcal V}$ such that, for every configuration $(G,\id,I)$ of size $n$, the
two following conditions are satisfied.

\begin{itemize}
\item \completeness
If $(G,\id,I) \in \cL$ then, there exists a prover $ {\mathcal M}$ such that 

$({\mathcal V}, {\mathcal M}) \in {\dAM}_{\prot}[k,f(n)]$
(resp. $({\mathcal V}, {\mathcal M}) \in {\dMA}_{\prot}[k,f(n)]$) and 
\noindent
\[\mathbf{Pr} \Big{[} \mathcal{V} \mbox{ accepts } (G,\id,I) \mbox{ in every node given } \mathcal{M}\Big{]} \geq 1 - \eps.\]

\item  \soundness
 If $(G,\id,I) \notin \cL$ then, for every prover $ {\mathcal M}$ such that 
 
 $({\mathcal V}, {\mathcal M}) \in {\dAM}_{\prot}[k,f(n)]$
(resp. $({\mathcal V}, {\mathcal M}) \in {\dMA}_{\prot}[k,f(n)]$),
\noindent
\[\mathbf{Pr} \Big{[} \mathcal{V} \mbox{ rejects } (G,\id,I) \mbox{ in at least one nodes given } \mathcal{M}\Big{]} \geq 1-\eps.\]
\end{itemize}
We also denote $\dAM[k,f(n)]= \dAM_{1/3}[k,f(n)]$ and $\dMA= \dMA_{1/3}[k,f(n)]$.
\end{definition}

We omit the subindex $\eps$ when its value is obvious from the context. 
For small values of $k$, instead of writing $\dAM[k,f(n)]$ and $\dMA[k,f(n)]$, we alternate \M s and \A s.
For instance: $ \dMAM[f(n)] = \dAM[3,f(n)] $, $\dAMA[f(n)] = \dMA[3,f(n)] $, etc. In particular $ \dAM[f(n)]= \dAM[2,f(n)] $, $ \dMA[f(n)] = \dMA[2,f(n)]$.

\begin{definition}
The shared randomness setting may be seen as if all the nodes, in any given round, sent the same random string to Merlin.
In order to distinguish between the settings of private randomness and shared randomness, we denote them by $\dAM^{\priv}[k,f(n)]$ and $\dAM^{\pub}[k,f(n)]$, respectively.
\end{definition}

\subsection{Simultaneous Messages Model}

In the simultaneous messages model (\textsf{SM}) there are three players, \emph{Alice}, \emph{Bob} and a \emph{referee}, who jointly want to compute a function $f(x,y)$. Alice and Bob are given inputs $x$ and $y$, respectively. The referee has no input. Alice and Bob are unable to communicate with each other, but, instead, are able to send a single message to the referee. Their messages depend on their inputs and a number of random bits. Then, using only the messages of Alice and Bob and eventually another random string, the referee has to output $f(x,y)$ (up to some error probability $\eps$, given by the coins of Alice, Bob, and the referee). A randomized protocol with error $\eps$ is \textit{correct} in the \textsf{SM} model if the answer is correct with probability at least $1-\eps$.

We are only interested in the \textsf{SM} model with \emph{private coins}, i.e. when the random strings generated by Alice, Bob, and the referee are independent. Interestingly, in this model, the power of randomness is very restricted. Indeed, in \cite{babai1997randomized}, Babai and Kimmel show that any randomized protocol computing a function $f$ in the simultaneous messages model using private coins requires messages of size at least the square root of its deterministic complexity. More precisely, if we define the deterministic complexity of $f$, $\textsf{D}(f)$,
as the size of the messages of an optimal \textsf{SM} deterministic protocol for $f$, the following proposition holds.

\begin{proposition}[\cite{babai1997randomized}, Theorem 1.4]\label{babai-kimmel}
	Let $f: X \times Y \to \{0,1\}$ be any boolean function. Let $0\leq\eps<\frac{1}{2}$. Any $\eps$-error $\mathsf{SM}$ protocol for solving 
	$f$ using private coins needs the messages to be of size
	at least $\Omega\left(\sqrt{\mathsf{D}(f)}\right)$. 
\end{proposition}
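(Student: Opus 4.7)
The plan is to prove the contrapositive: if $f$ admits a private-coin $\eps$-error $\mathsf{SM}$ protocol in which Alice and Bob each send at most $c$ bits, then $\mathsf{D}(f) = O(c^2)$, so $c = \Omega(\sqrt{\mathsf{D}(f)})$. I would first fix such a randomized protocol. Each input $x$ induces a distribution $\mu_x$ on $\{0,1\}^c$ for Alice's message, and each $y$ induces a distribution $\nu_y$ on $\{0,1\}^c$ for Bob's. After averaging over the referee's private coins, the referee may be taken as a deterministic map $R:\{0,1\}^c \times \{0,1\}^c \to [0,1]$, and the protocol's acceptance probability equals $g(x,y) = \mathbb{E}_{m_A \sim \mu_x,\, m_B \sim \nu_y}[R(m_A, m_B)]$, with $|g(x,y) - f(x,y)| \leq \eps$ on every input pair.

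The derandomization step replaces $\mu_x$ by a deterministic multiset $S_x \subseteq \{0,1\}^c$ of size $k$ (to be chosen) whose empirical distribution approximates $\mu_x$ against every Bob-side test function of the form $m_B \mapsto R(\cdot, m_B)$. Drawing $k$ i.i.d.\ samples from $\mu_x$ and applying Hoeffding's inequality gives, for each fixed $m_B$, a deviation at most $\delta$ except with probability $2e^{-2k\delta^2}$. A union bound over the $2^c$ possible choices of $m_B$ shows that $k = O(c/\delta^2)$ samples suffice for the approximation to be uniform over all $m_B$ simultaneously; by the probabilistic method, a suitable deterministic $S_x$ exists. Alice, who knows $\mu_x$ and $R$, can compute such an $S_x$ locally and transmit it using $kc = O(c^2/\delta^2)$ bits; Bob constructs $S_y$ analogously from $\nu_y$.

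Finally, the referee outputs $1$ if $\hat g(x,y) := \frac{1}{k^2}\sum_{m_A \in S_x,\, m_B \in S_y} R(m_A, m_B) \geq 1/2$ and $0$ otherwise. Two applications of the triangle inequality yield $|\hat g(x,y) - g(x,y)| \leq 2\delta$, hence $|\hat g(x,y) - f(x,y)| \leq \eps + 2\delta$; choosing $\delta = (1/2 - \eps)/4$ makes the threshold always correct, giving a deterministic $\mathsf{SM}$ protocol of cost $O(c^2)$. The crux of the argument is the union bound over Bob's $2^c$ messages, which keeps the required sample size $k$ linear in $c$ rather than exponential; balancing the two one-sided approximation errors against the soundness gap $1/2 - \eps$ is routine but must be handled with care.
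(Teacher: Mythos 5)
Your proof is correct, and it is essentially the standard Babai--Kimmel sampling argument: replace each message distribution by an $O(c/\delta^2)$-size multiset via Hoeffding plus a union bound over the other party's $2^c$ messages, then threshold the empirical acceptance average. The paper itself only cites this proposition, but its own proof of the $\MA^{\sym}$ extension (Theorem~\ref{lowerMAsym}) follows exactly this route, so your approach matches the paper's.
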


By incorporating a prover, we can define interactive proofs in the ${\mathsf{SM}}$ model. More precisely, we define $\MA^{\sym}$ as follows.

\begin{definition}
Let $f: X \times Y \to \{0,1\}$ be a boolean function. We say that $f \in {\MA}_{\eps}^{\sym}$ if there exists a protocol for Alice and Bob, where:
	
\begin{itemize}
\item A fourth player, the prover, provides Alice and Bob with a proof $m$ (which he builds as a function of the input of Alice $x \in X$ and the input of Bob $y \in Y$).
\smallskip

\item Alice and Bob simultaneously send a message to the referee, that depends on their inputs, their own randomness, and the certificate $m$ provided by the prover. Let $\omega_{x,m}(r)$ be the message sent by Alice given the input $x$ and the seed $r$ and let $\varphi_{y,m}(s)$ be the message sent by Bob, given $y$ and the seed $s$.

 \item Finally, let $\rho(\omega, \varphi)$ be the random variable indicating the referee's decision given the messages $\omega \varphi$ and its random bits.
\end{itemize}

For all $x \in X,y \in Y$, the protocol must satisfy the following: 

\begin{itemize}
\item \completeness If $f(x,y) = 1$, there exists a proof $m$ s.t. $ \mathbf{Pr}\left (\rho(\omega_{x,m}, \varphi_{y,m} = 1\right) \ge1-\eps$.

\item\soundness If $f(x,y) = 0$ then, for any proof $m$, $\mathbf{Pr} ( \rho(\omega_{x,m} , \varphi_{y,m})= 1)\leq \eps$.
\end{itemize}

\end{definition}

Let $f: X \times Y \to \{0,1\}$ be a boolean function. The cost of an $\MA^{\sym}$ protocol
that solves $f$ is the sum of the proof size, along with the maximum size of a message considering all possible random bits.
When there is no randomness we recover the classical definition of non-deterministic complexity
in the ${\mathsf{SM}}$ model, which we denote by $\M^{\sym}(f)$.

\begin{remark}
The assumption that both Alice and Bob receive the same proof does not affect the definition of the class: in case that Alice receives $m_a$ and Bob receives $m_b$ as proofs, then Merlin may concatenate $m_a m_b$ and then Alice and Bob just consider their part of the message (the referee verifies that Alice and bob received, indeed, the same message).
\end{remark}

\section{The Limits of Shared Randomness}\label{sec:lim}

In this section we show that the largest possible gap between non-interactive proofs and interactive proofs with shared randomness
is exponential.  More precisely, we show that any  interactive protocol using shared randomness
 can be derandomized into a non-interactive proof, with an exponential-factor overhead in the bandwidth. From this result we can obtain
 lower bounds, some of them even tight, for the bandwidth of interactive-proofs with shared randomness.

\begin{theorem}\label{thm:simula}
Let $k \ge 1$ and let $\mathcal{L}$ be a language such that $\mathcal{L} \in \dAM^{\pub}[k, f(n)]$. Then, $\mathcal{L} \in \dM(2^{O(k\:f(n))}+ \log n)$.
\end{theorem}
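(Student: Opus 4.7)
The plan is to exploit the fact that a $\dAM^{\pub}[k,f(n)]$ protocol $\Pi$ for $\mathcal{L}$ uses only $R = O(kf(n))$ bits of shared randomness in total (Arthur speaks at most $\lceil k/2\rceil$ times, each round injecting $O(f(n))$ bits), so only $N=2^R = 2^{O(kf(n))}$ distinct full random transcripts $r\in\{0,1\}^R$ are possible. One can therefore ``enumerate'' every such $r$ inside a non-interactive certificate at only a $2^{O(kf(n))}$ blow-up in bandwidth.

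Fix a yes-instance and let $\mathcal{M}^{*}$ be an optimal prover strategy for $\Pi$. The $\dM$ prover writes into each node $v$'s certificate: (a) the full response table of $\mathcal{M}^{*}$ restricted to $v$, indexed by every Arthur-prefix $(r_1,r_3,\dots,r_{2i-1})$ and every Merlin round $2i\le k$ (each response has size $O(f(n))$, and the number of prefixes across all rounds is $2^{O(kf(n))}$); (b) the global set $S=\{r\in\{0,1\}^R:\Pi \text{ under } \mathcal{M}^{*} \text{ and randomness } r \text{ accepts at every node}\}$, stored as a length-$2^R$ bitmap. Both parts have size $2^{O(kf(n))}$. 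In the verification round each node $v$ computes, for every $r$, the message $M_v(r)$ it would broadcast in $\Pi$ from its (prefix-indexed) response table, and sends to its neighbors the full vector $(M_v(r))_{r}$ together with its copy of $S$. Node $v$ accepts iff (i) every neighbor broadcast the same $S$ as appears in $v$'s certificate; (ii) $|S|\ge 2N/3$; (iii) for every $r\in S$ the local verification predicate of $\Pi$ at $v$ evaluates to accept when fed with random string $r$, $v$'s prefix-indexed certificates, and the received messages $M_u(r)$.

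Completeness is straightforward: on a yes-instance the honest prover uses $\mathcal{M}^{*}$ and the corresponding $S$, so $|S|\ge 2N/3$ by the $1/3$ completeness error of $\Pi$, and checks (i)--(iii) succeed everywhere. For soundness, assume a no-instance on which every node accepts. Condition (i) together with connectivity of $G$ forces all nodes to agree on one common $S$; condition (ii) guarantees $|S|\ge 2N/3$. Because each node's certificate is prefix-indexed, the certificates jointly specify a legitimate $\Pi$ strategy $\mathcal{M}$; condition (iii) then says that $\Pi$ under $\mathcal{M}$ accepts everywhere on every $r\in S$, giving $\Pr_r[\mathcal{V}\text{ accepts at every node}]\ge |S|/N \ge 2/3$ and contradicting the $1/3$ soundness of $\Pi$. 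The total bandwidth is $2^{O(kf(n))}+O(\log n)$, with $\log n$ absorbing identifier broadcasts. The main obstacle is foreclosing two kinds of prover cheating: shipping incompatible sets $S$ to different nodes (handled by broadcasting $S$ and neighbor agreement, which propagates through $G$ by connectivity) and under-sizing $S$ below $2N/3$ to dodge the soundness argument (handled by the direct cardinality test (ii), after which the pigeonhole inequality $|S|+|\{r:\text{some node rejects under }\mathcal{M}\}|\ge 4N/3>N$ forces $S$ to intersect the rejecting transcripts, producing a rejection that contradicts (iii)).
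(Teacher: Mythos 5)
Your proposal is correct, and its core is the same derandomization-by-enumeration idea as the paper's: since only $2^{O(kf(n))}$ coin sequences exist, the prover ships, for every node, Merlin's answers for every coin sequence, and the nodes re-simulate the whole interaction locally. Where you genuinely diverge is in how the nodes coordinate the global acceptance condition. The paper has the prover supply a spanning tree (root, parent, depth) and an aggregation vector $\acc(v)$ so that per-coin acceptance is collected bottom-up and only the root performs the ``at least two thirds of the coins'' count; this tree is exactly what the paper points to as the source of the additive $\log n$ term. You instead have the prover commit, in every certificate, to a single global bitmap $S$ of coins on which all nodes allegedly accept, enforce that $S$ is common via neighbor agreement plus connectivity, require $|S|\ge 2N/3$, and require each node to accept every $r\in S$; soundness then follows from $|S|+|\{r:\text{some node rejects}\}|\ge 4N/3>N$. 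This buys two things. First, it removes the need for the tree altogether, so your protocol actually costs $2^{O(kf(n))}$ bits with the $+\log n$ of the statement pure slack — notable because the paper's Theorem~\ref{thm:whp} is motivated precisely by trying to avoid the tree, and only manages it under error $<\tfrac{1}{n+1}$, whereas your mechanism works at error $1/3$ (the bitmap $S$ is affordable here only because its length $2^{O(kf(n))}$ is already within budget, so this does not subsume their low-error regime result, but it does simplify the present theorem). Second, your response tables are indexed by Arthur-\emph{prefixes}, which makes explicit that the certificates jointly define a legitimate prover for the original protocol (one whose round-$2i$ answers cannot depend on later coins); the paper indexes transcripts by full coin sequences and leaves this prefix-consistency point implicit, so your formulation is the cleaner one on that step.
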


\begin{proof}
	Let $\mathcal{P}$ be a protocol deciding $\mathcal{L}$ using shared randomness, $k$ rounds of interaction, bandwidth $f(n)$, and with error probability $1/3$. We use 
	$\mathcal{P}$ to define a protocol $\mathcal{P}'$ for $\mathcal{L}$ with only one round of interaction and bandwidth $2^{\cO(k\cdot f(n))}+ \log n$. Let us fix $(G, \id, I)$, an instance of $\mathcal{L}$. 
	
	For a prover $\mathcal{M}$ for protocol $\mathcal{P}$, we define a \emph{transcript} of a node $v \in G$ as a $k$-tuple $\tau( \mathcal{M}, v) = (\tau_1, \tau_2, \dots, \tau_{k})$ such that $\tau_i \in \{0,1\}^{f(n)}$ is a sequence of bits communicated in the $i$-th round of interaction of $\mathcal{P}$, for each $i \in \{1, \dots, k\}$. If both $k$ and $i$ are even, then $\tau_i$ is a message that $\mathcal{M}$ sends to node $v$ in the $i$-th interaction. If $k$ is even and $i$ is odd, then $\tau_i$ is a random string drawn from the shared randomness. Finally, roles are reversed fin the case where $k$ is odd.
	
	Let us fix $\ell=\lfloor \frac{k}{2}\rfloor$ and let $R$ be the set of all $\ell$-tuples $r = (r_1, \dots, r_\ell)$ such that $r_i \in \{0,1\}^{f(n)}$, for each $i\in \{1, \dots, \ell\}$. For $v\in G$ and $r\in R$ and a fixed prover $\mathcal{M}$, we call $\tau(\mathcal{M}, v, r)$ the transcript $\tau(\mathcal{M}, v)$ such that $\tau_{2i-1} = r_i$ when $k$ is even and $\tau_{2i}=r_i$ otherwise, for each $i\in \{1, \dots, \ell\}$. In full words, $\tau(\mathcal{M}, v, r)$ is the transcript of the protocol, when the nodes draw the random strings from $r$. 
	
	We can construct a one-round protocol $\mathcal{P}'$, where the prover sends to each node $v$ the following certificate:
	
	\begin{enumerate}
		
		\item A spanning tree $T$ given by the $\id$ of a root $\rho$, the parent of $v$ in the tree, denoted by $t_v$, and the distance in $T$ 
		from $\rho$ to $v$, given by $d_v$.
		
		\item The list $m_v = \{m^v_{r}\}_{r\in R}$, where $m^v_r \in \{0,1\}^{k f(n)}$ is interpreted as $\tau(\mathcal{M}, v, r)$.

		\item A vector $\acc(v)\in \{0,1\}^{|R|}$ where $\acc(v)_{r}$ indicates that $u$ accept in the transcript given by $m^u_r$, for all $u$ in the subtree $T_v$ associated to $v$.

	\end{enumerate}

	Given the messages received from the prover, the nodes first verify the consistency of the tree given by \lipics{(1)}, following the spanning tree protocol given in \cite{korman2010proof}. Then, each node $v$ checks that for each $r \in R$ the given transcript $m_v^r$ is consistent with $r$. Then, for each $r\in R$, each node simulates the $k$ rounds of protocol $\mathcal{P}$ using the certificates of its  neighborhood, and decide whether to accept or reject. That information is stored in a vector $a^v \in \{0,1\}^{|R|}$. In order to check the consistency of the vector $\acc(v)$, for each $r\in R$ we say that $\acc(v)_r=1$ if and only if $a^v_r = 1$ and $\acc(u)_r=1$ for every children $u$ in $T_v$. 
	If all previous conditions are satisfied and $v$ is not the root, then $v$ accepts. Finally, the root $\rho$ verifies previous conditions and counts the number of accepting entries in $\acc(\rho)$ and accepts if they are at least two-thirds of the total. In any other case, the nodes reject.

	The number of bits sent by the prover is: $\cO(\log n)$ in \lipics{(1)}, $(k f(n))\cdot 2^{\cO(k\cdot f(n))} = 2^{\cO(k\cdot f(n))}$ in \lipics{(2)} and $2^{\cO(k\cdot f(n))}$ in \lipics{(3)}. So, in total, the number of bits communicated in any round is $2^{\cO(k\cdot f(n))}+\log n$. We now explain the completeness and soundness.
	
	\begin{itemize}
		\item \textbf{{Completeness:}} If an instance $(G,\id, I)$ is in $\mathcal{L}$, an honest prover will send the real answers that each node would have received in the $k$-round protocol, for which at least two-thirds of the coins all nodes accept, therefore the root accepts.
		
		\medskip
		
		\item \textbf{{Soundness:}} Suppose now that $(G,\id, I)$ is not in $ \mathcal{L}$, and suppose by contradiction that there exist a prover 
		$\tilde{\mathcal{M}}$ of protocol $\mathcal{P}'$ accepted by all vertices. Let $m_v$ be the certificate that $\tilde{\mathcal{M}}$ gives to vertex $v$ given by \lipics{(2)}. Now, let $\hat{\mathcal{M}}$ be a prover of $\mathcal{P}$ such that $\tau(\hat{\mathcal{M}},v,r) = m_v^r$, for each $r\in R$. Since the root accepts, all nodes must accept two thirds of the transcripts, which contradicts the soundness of $\mathcal{P}$. 
		
	\end{itemize}
\end{proof}

A direct consequence of previous result is the transfer of lower bound from non-determinism to distributed interactive protocols with shared randomness.

\begin{corollary}
	\label{lowerBound}
	Let $k \ge 1$ and let $\mathcal{L}$ be a language such that $\mathcal{L}  \in \dM[\Omega(f(n))]$, where $f(n) =\omega(\log n)$. Then,
$\mathcal{L} \in \dAM^{\pub}[k, \Omega(\frac{\log f(n)}{k})] = \dAM^{\pub}[k, \Omega(\log f(n))]$. 
\end{corollary}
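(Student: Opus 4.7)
The plan is to derive this corollary as a direct contrapositive of Theorem~\ref{thm:simula}. The hypothesis $\mathcal{L}\in\dM[\Omega(f(n))]$ should be read as the statement that every proof-labeling scheme for $\mathcal{L}$ requires certificates of bandwidth $\Omega(f(n))$, equivalently $\mathcal{L}\notin \dM[o(f(n))]$.

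I would argue by contradiction. Suppose, toward a contradiction, that $\mathcal{L}\in \dAM^{\pub}[k, g(n)]$ for some function $g$ with $g(n)=o(\log f(n)/k)$. Applying Theorem~\ref{thm:simula} immediately yields $\mathcal{L}\in \dM[2^{O(k\cdot g(n))}+\log n]$. Since $k\cdot g(n)=o(\log f(n))$, we obtain $2^{O(k\cdot g(n))} = f(n)^{o(1)} = o(f(n))$. Moreover, the assumption $f(n)=\omega(\log n)$ gives $\log n=o(f(n))$ as well, so the total bandwidth $2^{O(k\cdot g(n))}+\log n$ is $o(f(n))$, contradicting the hypothesis $\mathcal{L}\notin \dM[o(f(n))]$. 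This forces $g(n)=\Omega(\log f(n)/k)$, which is the first bound claimed in the corollary.

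The second equality $\Omega(\log f(n)/k)=\Omega(\log f(n))$ is just the observation that $k$ is treated as a fixed constant throughout, so the factor $1/k$ is absorbed into the $\Omega$-notation. There is no serious obstacle in the argument; the only step worth stating carefully is the asymptotic simplification $2^{o(\log f(n))}=o(f(n))$, which is valid because $2^{c\log f(n)}=f(n)^{c}$ and letting the constant $c$ tend to zero yields growth slower than any positive power of $f(n)$. Combined with the $f(n)=\omega(\log n)$ hypothesis ensuring that the additive $\log n$ term is also absorbed, this comprises the entire content of the proof.
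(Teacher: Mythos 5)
Your argument is correct and is exactly the intended one: the paper states this corollary as a direct consequence of Theorem~\ref{thm:simula} without spelling out the contrapositive, and your contradiction argument (with the observations that $2^{o(\log f(n))}=o(f(n))$ and that $f(n)=\omega(\log n)$ absorbs the additive $\log n$ term, plus $k$ constant for the final equality) is precisely the omitted calculation.
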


\begin{corollary}
	Let $k \ge 1$. Then, problems \textsc{symmetry}, \textsc{diameter}, $\overline{3\text{-}\textsc{col}}$, $\triangle\text{-}\textsc{free}$ $\in \dAM^{\pub}[k,\Omega(\log n)]$. Also, 
	$\textsc{mst} \in \dAM^{\pub}[k,\Omega(\log \log n)]$.
\end{corollary}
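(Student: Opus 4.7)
The plan is to derive each bound by direct application of the preceding Corollary~\ref{lowerBound}, which converts any $\dM$ bandwidth lower bound $g(n) = \omega(\log n)$ into a $\dAM^{\pub}[k, f(n)]$ lower bound $f(n) = \Omega(\log g(n))$ for any fixed $k$. Once this corollary is in hand, the task reduces to looking up the known non-deterministic (PLS) complexity of each listed problem and checking that it is $\omega(\log n)$.

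The ingredients I would collect from the literature are as follows. Both $\textsc{symmetry}$ and $\overline{3\text{-}\textsc{col}}$ require PLS certificates of size $\Omega(n^2)$, by the G\"o\"os--Suomela lower bounds already cited in the introduction. The problem $\textsc{diameter}$ is known to require $\tilde{\Omega}(n)$-bit certificates. The problem $\triangle\text{-}\textsc{free}$ requires certificates of size $n/e^{\mathcal{O}(\sqrt{\log n})}$, as recalled in the related work section. Finally, $\textsc{mst}$ has PLS complexity $\Theta(\log^2 n)$, due to Korman--Kutten--Peleg. Each of these lower bounds is comfortably $\omega(\log n)$, so the hypothesis of Corollary~\ref{lowerBound} is satisfied in every case.

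Plugging the bounds into Corollary~\ref{lowerBound} then yields, for any fixed $k$, a $\dAM^{\pub}[k,\cdot]$ bandwidth lower bound of $\log(\Omega(n^2)) = \Omega(\log n)$ for $\textsc{symmetry}$ and $\overline{3\text{-}\textsc{col}}$, $\Omega(\log n)$ for $\textsc{diameter}$, $\log(n/e^{\mathcal{O}(\sqrt{\log n})}) = \Omega(\log n)$ for $\triangle\text{-}\textsc{free}$, and $\log(\log^2 n) = \Omega(\log \log n)$ for $\textsc{mst}$, exactly matching the statement.

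There is essentially no real obstacle beyond locating the correct PLS lower-bound references and performing these elementary arithmetic checks; all the substantive work has already been done in Theorem~\ref{thm:simula} and Corollary~\ref{lowerBound}. The one subtlety worth flagging is that the first equality in Corollary~\ref{lowerBound}, namely $\Omega(\log f(n)/k) = \Omega(\log f(n))$, is being used crucially here since we want the bound to be uniform in $k$ (for every fixed constant $k$), and the plugged-in values of $g(n)$ are polynomial or nearly polynomial in $n$, so absorbing the $1/k$ factor costs nothing asymptotically.
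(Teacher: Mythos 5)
Your proposal is correct and follows essentially the same route as the paper: the paper's proof likewise just invokes Corollary~\ref{lowerBound} together with known $\dM$ (PLS) lower bounds for each problem ($\Omega(n)$-type bounds for \textsc{symmetry}, \textsc{diameter}, $\overline{3\text{-}\textsc{col}}$, $\triangle$-\textsc{free}, and $\Omega(\log^2 n)$ for \textsc{mst}), and the slight differences in the exact cited bounds (e.g.\ $\Omega(n^2)$ versus $\Omega(n)$ for \textsc{symmetry}) are immaterial since all are $\omega(\log n)$ and yield the same logarithmic conclusions.
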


\begin{proof}
We just need to apply already known lower bounds: $\textsc{symmetry} \in \dM[\Omega(n)]$ from~\cite{goos2016locally}, $\textsc{diameter} \in \dM[\Omega(n)]$ from~\cite{censor2018approximate}, $\overline{3\text{-}\textsc{col}} \in \dM[\Omega(n)]$ from~\cite{goos2016locally}, 
$\triangle\text{-}\textsc{free} \in \dM[\Omega(n)]$ from~\cite{crescenzi2019trade}, $\textsc{mst} \in  \dM[\Omega(\log^2 n)]$ from~\cite{korman2007distributed}.
\end{proof}

\begin{remark}
The lower bound saying that $\textsc{symmetry} \in \dAM^{\pub}[k,\Omega(\log n)]$ is tight. 
More precisely, the $\dMAM^{\priv}[\log n]$ protocol given by Kol, Oshman and Saxena~\cite{kol2018interactive} for solving \symmetry\ can be easily adapted to work with shared randomness. In fact, the protocol  is somehow designed in that way, where one particular node generates the random string and shares it with the other nodes (through Merlin).
 Therefore, $\symmetry \in \dMAM^{\pub}[\log n]$. On the other hand,  $\symmetry \in \dM[\Omega(n^2)]$~\cite{goos2016locally}.
\end{remark}

In the proof of Theorem~\ref{thm:simula}, in order to design a \dM\ protocol, we had to construct  a spanning tree for verifying
 that two thirds of all coins are accepted by {\emph{all nodes}}. In fact,  it could be the case that, for negative instances,
every  node rejects a very small portion of the coins, getting the wrong idea that the instance is positive. For avoiding that, 
and coordinating the nodes,  in the $\dM$ protocol we construct a spanning tree. This is where the additive $\log n$ term comes from. 
Next result states that previous situation does not
occur if, instead of two thirds,  we ask the interactive protocol to accept \textit{with high probabilty}.

\begin{theorem}\label{thm:whp}
Let $k\geq 1$ and let $\mathcal{L}$ be a language such that  
$\mathcal{L} \in \dAM^{\pub}_{\eps}[k, f(n)]$, with $\eps < \frac{1}{n+1}$. Then, $\mathcal{L} \in \dM[2^{\cO(k\; f(n))}]$. 
\end{theorem}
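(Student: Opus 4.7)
The plan is to follow the structure of the proof of Theorem~\ref{thm:simula}, but exploit the very small error probability $\eps<1/(n+1)$ in order to avoid the spanning-tree gadget that forced the additive $\log n$ term. As before, for a protocol $\mathcal{P}$ with shared randomness, $k$ rounds, and bandwidth $f(n)$, the derandomized prover in $\mathcal{P}'$ sends to every node $v$ the list $m_v=\{m^v_r\}_{r\in R}$ where $R=(\{0,1\}^{f(n)})^{\lfloor k/2\rfloor}$ enumerates all possible shared-randomness realizations and $m^v_r$ is interpreted as the transcript $\tau(\mathcal{M},v,r)$. I would drop the spanning-tree certificate and the acceptance vector entirely, keeping only this list of transcripts.

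Each node $v$ first verifies, for every $r\in R$, that the random-string slots of $m^v_r$ match $r$. Then, using $m^v_r$ and the transcripts $m^u_r$ of its neighbors $u$, it locally simulates the verification round of $\mathcal{P}$ to determine whether $v$ would accept under coin sequence $r$. Let $p_v$ be the fraction of $r\in R$ on which $v$ would reject. The local decision rule is then simply: $v$ accepts in $\mathcal{P}'$ iff $p_v\le 1/(n+1)$.

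For completeness, the honest prover provides the transcripts of an honest execution of $\mathcal{P}$. Since $\Pr_r[\text{all nodes accept}]\ge 1-\eps$, each individual node satisfies $\Pr_r[v \text{ rejects}]\le\eps<1/(n+1)$, hence $p_v\le \eps<1/(n+1)$ and every node accepts. For soundness, let $(G,\id,I)\notin\mathcal{L}$ and let $\tilde{\mathcal{M}}$ be any prover of $\mathcal{P}'$; exactly as in Theorem~\ref{thm:simula}, $\tilde{\mathcal{M}}$ induces a prover $\hat{\mathcal{M}}$ of $\mathcal{P}$ with $\tau(\hat{\mathcal{M}},v,r)=m^v_r$. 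Soundness of $\mathcal{P}$ gives $\Pr_r\bigl[\bigcup_v\{v \text{ rejects}\}\bigr]\ge 1-\eps$, and applying the union bound in the reverse direction yields $\sum_v p_v\ge 1-\eps$, so $\max_v p_v\ge(1-\eps)/n>1/(n+1)$, where the last inequality is equivalent to the hypothesis $\eps<1/(n+1)$. Thus at least one node sees $p_v>1/(n+1)$ and rejects in $\mathcal{P}'$. The bandwidth of $\mathcal{P}'$ is dominated by the transcript list, of total size $|R|\cdot k f(n)=2^{\cO(k\,f(n))}$, with no additive $\log n$ term.

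The only delicate point I expect is the threshold calculation: the whole argument hinges on having a number strictly between $\eps$ and $(1-\eps)/n$, and the choice $1/(n+1)$ works precisely because $\eps<1/(n+1)$ implies both $\eps<1/(n+1)$ and $(1-\eps)/n>1/(n+1)$. Everything else, namely the consistency checking of the random-string slots, the local simulation of the verification round of $\mathcal{P}$ from neighbors' transcripts, and the reduction of a cheating $\mathcal{P}'$-prover to a cheating $\mathcal{P}$-prover, is inherited verbatim from the proof of Theorem~\ref{thm:simula}.
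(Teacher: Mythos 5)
Your proposal is correct and follows essentially the same route as the paper's proof: drop the spanning-tree certificate, have each node locally tally the fraction of coin sequences it would reject, accept iff that fraction is below a threshold of order $1/(n+1)$, and use the union bound over the $n$ nodes for soundness (the paper phrases the same counting with a parameter $m>n+1$ rather than your explicit threshold $1/(n+1)$). Your threshold calculation $(1-\eps)/n>1/(n+1)\iff\eps<1/(n+1)$ is exactly the point the paper's argument relies on, so nothing is missing.
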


\begin{proof}
	Let $\mathcal{L}$ be a language over instance $(G, \id, I)$ with $n(G)=n$ and let $\mathcal{P}$ be a $k$ round protocol for $\mathcal{L}$ with $f(n)$ bits such that its acceptante error is less than $\frac{1}{m}$, where $m> n+1$.
	
	In order to construct the protocol, we proceed in a similar way as the previous theorem: Merlin sends each node an enumeration $m_{r}^v$ of the answers to each possible coin that the original prover sends each node. With this all nodes share their ceritificates to each coin, simulate all of them and each node $v$ accepts iff $(1-\frac{1}{m})$  of the coins are accepted by him.
	
	We have that if $G$ is a  \emph{yes instance}, an honest prover will return the answers to the original protocol and all nodes will accept $(1-\frac{1}{m})$ of all possible coins, therefore all accept the protocol.
	
	If $G$ is a \textit{no instance} and all nodes accept the protocol, we have that all nodes reject at most $\frac{2^{k\:f(n)}}{m}$ coins, therefore the total amount of coins that are rejected by some node is at most $\frac{n2^{k\: f(n)}}{m}$, which is strictly less than $(1-\frac{1}{m})2^{k\:f(n)}$.
\end{proof}

\begin{corollary}\label{cor:lowerwhp}
Let $k \ge 1$ and let $\mathcal{L}$ be a language such that $\mathcal{L}  \in \dM[\Omega(f(n))]$. Then, 
$\mathcal{L} \in \dAM^{\pub}_{\eps}[k, \Omega(\frac{\log f(n)}{k})] = \dAM^{\pub}_{\eps}[k, \Omega(\log f(n))]$,  with $\eps < \frac{1}{n+1}$.
\end{corollary}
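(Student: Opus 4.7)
The plan is to derive Corollary \ref{cor:lowerwhp} as an immediate contrapositive consequence of Theorem \ref{thm:whp}, in complete analogy with how Corollary \ref{lowerBound} is obtained from Theorem \ref{thm:simula}. The structural observation is that Theorem \ref{thm:whp} transfers any shared-randomness interactive protocol with very small error into a non-interactive proof, with an exponential blow-up in the bandwidth; hence any lower bound on the non-interactive bandwidth becomes, after taking logarithms, a lower bound on the interactive bandwidth.

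More concretely, suppose for contradiction that $\mathcal{L} \in \dAM^{\pub}_{\eps}[k,g(n)]$ for some $\eps < \tfrac{1}{n+1}$ and some bandwidth $g(n) = o(\log f(n)/k)$. By Theorem \ref{thm:whp}, this would give a $\dM$-protocol for $\mathcal{L}$ with bandwidth $2^{\cO(k\,g(n))}$. Since $k\,g(n) = o(\log f(n))$, this bandwidth is $2^{o(\log f(n))} = f(n)^{o(1)} = o(f(n))$, contradicting the hypothesis $\mathcal{L} \in \dM[\Omega(f(n))]$. Therefore every protocol realizing $\mathcal{L}$ in $\dAM^{\pub}_{\eps}[k,\cdot]$ must use bandwidth $\Omega(\log f(n)/k)$. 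The equality $\dAM^{\pub}_{\eps}[k,\Omega(\log f(n)/k)] = \dAM^{\pub}_{\eps}[k,\Omega(\log f(n))]$ asserted in the statement is just the usual observation that $k$ is treated as a fixed constant, so the factor $1/k$ is absorbed by the asymptotic notation.

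There is essentially no obstacle: the entire content lives in Theorem \ref{thm:whp}, and once its exponential-bandwidth simulation is in place, the corollary is a one-line logarithmic inversion. The only mild subtlety is keeping track of the quantifiers on $\eps$, which must remain below $1/(n+1)$ so that Theorem \ref{thm:whp} applies; this is why the same smallness hypothesis on $\eps$ appears in the corollary.
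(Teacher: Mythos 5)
Your proposal is correct and is exactly the intended argument: the paper states this corollary without proof as an immediate consequence of Theorem~\ref{thm:whp}, namely the same contrapositive step you give (a protocol with bandwidth $o(\log f(n)/k)$ and error below $\tfrac{1}{n+1}$ would yield a $\dM$ protocol with bandwidth $2^{o(\log f(n))}=o(f(n))$, contradicting the $\dM[\Omega(f(n))]$ lower bound), with the $1/k$ factor absorbed since $k$ is constant. Your handling of the $\eps<\tfrac{1}{n+1}$ hypothesis and the absence of the $+\log n$ term (hence no $f(n)=\omega(\log n)$ requirement, unlike Corollary~\ref{lowerBound}) matches the paper.
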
 

\begin{corollary}
	Let $k \ge 1$. Then,  problems  \textsc{planar}, \textsc{outerplanar}, \textsc{spanning-tree} $\in \dAM^{\pub}_{\eps}[k,\Omega(\log \log n)]$,
	with $\eps < \frac{1}{n+1}$.
\end{corollary}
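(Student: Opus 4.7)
The plan is simply to invoke Corollary~\ref{cor:lowerwhp} with the appropriate known non-interactive (\dM) lower bounds. That corollary converts a lower bound of $\Omega(f(n))$ in $\dM$ into a lower bound of $\Omega(\log f(n))$ in $\dAM^{\pub}_{\eps}[k,\cdot]$, valid for every fixed $k\ge 1$ and every $\eps < 1/(n+1)$. To reach the target $\Omega(\log\log n)$ it suffices that $f(n)$ grows at least polylogarithmically in $n$; in particular, a $\dM[\Omega(\log n)]$ lower bound is enough, since then $\log f(n)=\log\log n$.

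First I would recall the pertinent PLS lower bounds already established in the literature. For \textsc{spanning-tree}, the classical $\Omega(\log n)$ certificate lower bound of Korman, Kutten and Peleg yields $\textsc{spanning-tree}\in \dM[\Omega(\log n)]$. For \textsc{planar} and \textsc{outerplanar}, analogous $\Omega(\log n)$ lower bounds on proof-labelling schemes are known, obtained by the standard argument that any PLS for such a graph class must, in particular, certify a consistent labelling of at least $n$ distinguishable objects, forcing certificates of size $\Omega(\log n)$. Thus $\textsc{planar}, \textsc{outerplanar}\in \dM[\Omega(\log n)]$ as well.

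Then I would plug $f(n)=\log n$ into Corollary~\ref{cor:lowerwhp}: for each of the three languages this immediately yields membership in $\dAM^{\pub}_{\eps}[k,\Omega(\log\log n)]$ for every fixed $k\ge 1$ and every $\eps < 1/(n+1)$, which is precisely the statement to prove.

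The only real obstacle is bookkeeping rather than mathematics: one must check that each cited PLS lower bound is stated in (or transfers straightforwardly to) the model of Section~\ref{sec:model-def}, with polynomially-bounded identifiers and nodes unaware of $n$. Once this is verified, no further technical work is needed, since the corollary is a direct transfer of three classical non-deterministic lower bounds through Corollary~\ref{cor:lowerwhp}.
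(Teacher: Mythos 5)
Your proposal matches the paper's proof: the paper also just cites the known $\Omega(\log n)$ proof-labelling-scheme lower bounds for \textsc{planar}, \textsc{outerplanar} and \textsc{spanning-tree} (from G\"o\"os and Suomela) and transfers them through Corollary~\ref{cor:lowerwhp} with $f(n)=\log n$, which is exactly why the statement carries the restriction $\eps<\frac{1}{n+1}$ rather than going through Corollary~\ref{lowerBound}. The only caveat is that your sketched justification for the planarity and outerplanarity lower bounds (``certifying $n$ distinguishable objects'') is not how those $\Omega(\log n)$ bounds are actually proven; you should simply cite them, as the paper does.
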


\begin{proof}
All these languages belong to $\dM[\log n]$~\cite{goos2016locally}.
\end{proof}


\section{ \texorpdfstring{dAM$^{\pub}$ v\/s dAM$^{\priv}$}{shared dAM vs private dAM}}\label{sec:am}
A recent result  shows that \dAM\ protocols with private randomness are more powerful than \dAM\ protocols with shared randomness~\cite{crescenzi2019trade}.
The precise result corresponds to next proposition.

\begin{proposition}[\cite{crescenzi2019trade}]\label{prop:cresc}
Let $k \ge 1$, $0< \eps < \frac{1}{2}$, and $\mathcal{L}$ be a language such that  $\mathcal{L} \in \dAM_{\eps}^{\pub}[k, f(n)]$. Then,   $\mathcal{L} \in \dAM_{\eps}^{\priv}[k, f(n) + \log n]$.
\end{proposition}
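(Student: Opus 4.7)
The plan is to construct, from the given $\dAM^{\pub}_{\eps}[k, f(n)]$ protocol $\mathcal{P} = (\mathcal{V}, \mathcal{M})$, a private-randomness protocol $\mathcal{P}' = (\mathcal{V}', \mathcal{M}')$ in which the private random string of a canonical node $v^{\star} = \arg\min_{v \in V(G)} \id(v)$ plays the role of the original shared string at every Arthur round. The choice of $v^{\star}$ is fixed by the graph and not by the prover, and this is precisely what will keep the simulation sound.

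Concretely, in each Arthur round $i$, every node $v$ draws a private string $r_i(v) \in \{0,1\}^{f(n)}$ and sends it to $\mathcal{M}'$; in each Merlin round, $\mathcal{M}'$ sends to $v$ a certificate of size $f(n) + O(\log n)$ consisting of (a) the original certificate $c_i(v)$ that $\mathcal{M}$ would produce when the shared string is $r_i(v^{\star})$; (b) the claimed value $s_i$ of $r_i(v^{\star})$, so that $v$ can participate in the original verifier as if it had access to the shared string; (c) the identifier $\id(v^{\star})$; and (d) the standard $O(\log n)$-bit PLS for a spanning tree rooted at $v^{\star}$ (parent pointer and hop-distance to the root). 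In the verification round each node runs $\mathcal{V}$ on the reconstructed data, using $s_1, \dots, s_{\lceil k/2 \rceil}$ as if they were the shared random strings, and additionally checks: the spanning-tree PLS (in particular that the root has identifier equal to $\id(v^{\star})$); local consistency of $\id(v^{\star})$ and of each $s_i$ with its neighbours, which by connectivity implies global consistency; the inequality $\id(v^{\star}) \leq \id(v)$; and, if $\id(v) = \id(v^{\star})$, the equalities $s_i = r_i(v)$ for every~$i$.

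Completeness is immediate: the honest prover that sets $s_i = r_i(v^{\star})$ and supplies the real spanning tree passes every local check with probability one, after which the simulation is exactly $\mathcal{P}$ run on the shared string $r_i(v^{\star})$, so it accepts with probability at least $1 - \eps$. For soundness, the spanning-tree PLS together with the test $\id(v^{\star}) \leq \id(v)$ forces $\id(v^{\star})$ to be the true minimum identifier of $G$, so the verified root is the true $v^{\star}$; that node then enforces $s_i = r_i(v^{\star})$, making the effective shared string a genuinely uniform random string that is independent of anything the prover could have done before round $i$. The soundness of $\mathcal{P}$ therefore bounds acceptance by $\eps$.

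The main obstacle is preventing Merlin from ``choosing'' a favourable random seed by announcing a bogus $\id(v^{\star})$: if he declared an identifier that is not carried by any node, then no node would ever execute the $s_i = r_i(v)$ test and he could fabricate the $s_i$ freely. The spanning-tree PLS rooted at the declared $v^{\star}$ defeats this attack, since a valid spanning tree must have an actual node as its root; this is also what accounts for the additive $\log n$ term in the statement.
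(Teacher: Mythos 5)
Your proof is correct; note that the paper itself gives no proof of this proposition, quoting it directly from Crescenzi, Fraigniaud and Paz~\cite{crescenzi2019trade}, and your argument is essentially a faithful reconstruction of their standard simulation: use the private coins of a canonically chosen node (here the minimum-identifier node, certified by a spanning-tree proof labeling of cost $O(\log n)$) as the shared string, have Merlin forward it, and enforce truthfulness via the designated node's self-check plus neighbor-consistency of the forwarded strings in the verification round. The only point worth making explicit is that the forwarded strings (and the original verification message) must fit in the verification-round broadcast, which costs $O(k\,f(n)+\log n)$ bits and is within the claimed bound since $k$ is a constant.
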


A natural question is whether the two models are equivalent. In this section we give a negative answer. We separate them through problem \amos. Recall that
\amos\ is the language of labeled graphs where at most  one node is selected. It is already known that $\amos \in \dM[\Theta(\log n)]$~\cite{goos2016locally}. Moreover,
in~\cite{fraigniaud2019distributed} the authors show that adding randomness {\emph{after}} the nondeterministic round does not help. 
More precisely, $\amos \in \dMA^{\priv}_{\eps} [\Omega(\log n)]$, for  $0< \eps < \frac{1}{5}$.

The situation changes dramatically when randomness goes {\emph{before}} nondeterminism, as explained in the following proposition.

\begin{proposition}[\cite{fraigniaud2019distributed}]
Let $0< \eps < \frac{1}{2}$. Then,  $\amos \in \dAM_{\eps}^{\priv}[\log(\eps^{-1})]=\dAM_{\eps}^{\priv}[1]$.
\end{proposition}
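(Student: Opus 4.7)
My plan is to exhibit an Arthur--Merlin protocol of bandwidth $k = \lceil \log(\eps^{-1}) \rceil$. In the randomness round, each node $v$ draws an independent uniform private string $r_v \in \{0,1\}^k$ and transmits $r_v$ to Merlin. Seeing all of the $r_v$'s, Merlin answers with a single $k$-bit string $b$, which is delivered as the certificate $c(v) = b$ to every node. An honest prover, when a unique selected vertex $v^*$ exists, sets $b = r_{v^*}$; if no vertex is selected $b$ is arbitrary. In the verification round, every node broadcasts its certificate to its neighbors and accepts iff every neighbor reports the same value and, in addition, $r_v = c(v)$ whenever $I(v) = 1$. Every message is $k = O(\log \eps^{-1})$ bits long.

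Completeness is then immediate: with the honest choice of $b$, every neighbor-consistency test passes and the (at most one) selected vertex certifies $r_{v^*} = b$ by construction, so the protocol accepts with probability $1$.

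For soundness I would fix a no-instance and pick two distinct selected nodes $v_1, v_2$, together with an arbitrary prover strategy; this induces an assignment $c : V \to \{0,1\}^k$. If $c$ is not constant on $G$, then connectivity of $G$ produces an edge whose endpoints receive different certificates and the neighbor-consistency test rejects with certainty. Otherwise $c \equiv b$, where $b$ is some deterministic function of $(r_v)_{v \in V}$. In that case both $v_1$ and $v_2$ accept only if $r_{v_1} = b = r_{v_2}$; since $r_{v_1}$ and $r_{v_2}$ are independent and uniform over $\{0,1\}^k$, this happens with probability $\Pr[r_{v_1} = r_{v_2}] = 2^{-k} \le \eps$.

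The main subtlety is recognizing that the one-round local consistency check is enough to force Merlin, on pain of certain rejection, to broadcast a single common string across the whole connected graph; once that step is in place, soundness collapses to a two-party equality test that the independence of the two private coins resolves within $\log(\eps^{-1})$ bits. This is precisely where \emph{private} randomness is essential: with shared coins one would have $r_{v_1} \equiv r_{v_2}$ and the cheating prover would always succeed, which is exactly what forces any \dAM\ protocol based on shared coins (or any \dMA\ protocol) to use larger certificates.
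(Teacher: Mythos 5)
Your protocol and analysis are correct, and they coincide with the standard argument for this result: the paper itself states the proposition as a citation to Fraigniaud et al.\ without reproving it, and the protocol in that reference is exactly yours (each node sends a private $\lceil\log\eps^{-1}\rceil$-bit string to Merlin, Merlin returns the selected node's string, nodes check local agreement and the selected node checks the match, so two selected nodes can both be fooled only when their independent strings collide, probability $2^{-k}\le\eps$). Nothing further is needed.
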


In the shared randomness framework, we can construct a protocol that uses bandwidth $O(\log\log n)$. As we are going to see 
in Theorem~\ref{thm:amos}, this upper bound is indeed tight.

\begin{lemma}\label{lem:AMOSloglogupper}
$\amos \in\dAM^{\pub}[\log \log n]$. 
\end{lemma}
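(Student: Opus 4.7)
The plan is to use a standard fingerprinting trick with a random prime of polylogarithmic size, which is the natural way to compress the $\Theta(\log n)$-bit identifier of the (alleged) unique selected node down to $\cO(\log\log n)$ bits. Let $\mathcal{P}$ be the set of primes in $[1,(\log n)^3]$, so $|\mathcal{P}| = \Theta((\log n)^3/\log\log n)$ by the prime number theorem. Arthur uses the shared randomness to pick $p\in\mathcal{P}$ uniformly at random; describing $p$ (or equivalently an index into $\mathcal{P}$) costs $\cO(\log\log n)$ bits, and every node can compute $p$ locally from the shared string. Merlin then sends to each node $v$ a value $m_v\in\{0,1,\dots,p-1\}$, which is meant to be $\id(u)\bmod p$, where $u$ is the unique selected node (or an arbitrary fixed value if no node is selected). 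This certificate also fits in $\cO(\log\log n)$ bits.

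The deterministic verification round is the following: every $v$ broadcasts $m_v$ to its neighbors and accepts iff (a) every incoming message equals $m_v$, and (b) whenever $I(v)=1$ it holds that $\id(v)\equiv m_v\pmod p$. Completeness is immediate: on a yes-instance Merlin sets every $m_v$ to the common residue and all local checks trivially succeed.

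For soundness, suppose $(G,\id,I)\notin\amos$, so there exist two selected nodes with distinct identifiers $s_1\neq s_2$. Since $G$ is connected, condition~(a) forces all $m_v$ to agree on some common value $m$, and condition~(b) applied at both selected nodes forces $s_1\equiv m\equiv s_2\pmod p$, i.e.\ $p\mid(s_1-s_2)$. Because $|s_1-s_2|\le n^{\cO(1)}$, this difference has at most $\cO(\log n)$ distinct prime divisors, while $|\mathcal{P}|=\Theta((\log n)^3/\log\log n)$, so the probability over the shared choice of $p$ that the adversarial prover fools all verifiers is $\cO(\log\log n/(\log n)^2)$, well below $1/3$ for large $n$ (and $\cO(1)$-many independent repetitions would drive it to any desired constant).

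The main subtlety is just calibrating the range of $p$ so that $\log p = \cO(\log\log n)$ while the number of primes in $\mathcal{P}$ is super-polynomially larger than the worst-case number of prime factors of $s_1-s_2$; the cubic exponent is overkill but keeps the arithmetic transparent. The only other point worth highlighting is that local agreement plus connectivity is what pins the prover down to a single global fingerprint, so that the classical two-party equality fingerprint argument can be lifted verbatim to the distributed setting without needing a spanning tree or any extra $\log n$ overhead.
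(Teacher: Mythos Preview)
Your proof is correct and follows essentially the same template as the paper: Arthur broadcasts a short random seed, Merlin sends a single $\cO(\log\log n)$-bit fingerprint of the selected node's identifier, nodes locally verify that Merlin's message is globally consistent, and each selected node checks that its own identifier's fingerprint matches. The only difference is the specific fingerprint function: you use reduction modulo a random prime of size $(\log n)^{\cO(1)}$, whereas the paper encodes $\id(v)$ as a polynomial over $\mathbb{F}_q$ with $q\approx (\log n)^{c+2}$ and evaluates it at a shared random point. Both are standard hash families with the required collision bound, so the arguments are interchangeable.
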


\begin{proof}
The protocol is the following. First, each node considers the smallest prime $q$ such that $\log^{c+2}n \leq q \leq2\log^{c+2}n$  and constructs a polynomial 
over the field $\mathbb{F}_q$ associated to its  \id\ given by $p_v(x)=\sum_{i\leq\log (\id(v))}\textsf{bin}_i(\id(v)) \cdot x^i$.  Where $\textsf{bin}_i(m)$ corresponds to i-th bit in the binary representation of $m$.
All nodes generate a random string $s \in \mathbb{F}_q$ using the shared randomness.
Then, the prover sends to each node the random evaluation  of the selected node $v_0$. More precisely, $\bar{p}= p_{v_0}(s)$,
which is of size $O(\log \log n)$. The nodes first check they received the same value of $\bar{p}$. If a node $v$ is not selected, then it always accepts; otherwise, it accepts if and only if $p_v(s) = p_{v_0}(s)$.
If an instance belongs to  \textsc{amos}, then all nodes accept. Otherwise, there will exist at least two selected nodes that accept, such that their \id's polinomial matches on $s$ with probability at most $\frac{1}{\log^cn}$.
\end{proof}

From Corollary \ref{cor:lowerwhp} we conclude that, for every $k \ge1$, $\amos \in \dAM^{\pub}_{\eps}[k,\Omega(\log \log n)]$ with $\eps < \frac{1}{n+1}$.
In other words, the protocol given in Lemma \ref{lem:AMOSloglogupper} matches the lower bound for all correct protocols that run with high  probability.
Next theorem shows that the upper bound is matched even when $\eps=\frac{1}{3}$. 
 
\begin{theorem}\label{thm:amos}
	Let $k \ge 1$. Then,  $\textsc{amos} \in \dAM^{\pub}[k, \Theta(\log \log n)]$. 
\end{theorem}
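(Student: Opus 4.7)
My plan is to establish the $O(\log\log n)$ upper bound and the matching $\Omega(\log\log n)$ lower bound separately, as these are qualitatively different tasks.

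For the upper bound, for $k = 2$ the statement is precisely Lemma~\ref{lem:AMOSloglogupper}: using shared randomness, the prover commits to $p_{v_0}(s)\in\mathbb{F}_q$ for the claimed unique selected node $v_0$, where $q = \Theta(\log^{c+2}n)$, and every selected node checks that its own polynomial evaluates to the committed value at the shared seed $s$. To extend this to any $k \geq 2$, I would prepend $k-2$ \emph{dummy} interactions in which Arthur's shared-random strings are ignored and Merlin sends empty certificates; the bandwidth, completeness, and soundness of the original protocol are all preserved, yielding $\amos \in \dAM^{\pub}[k,O(\log\log n)]$.

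For the lower bound, from Corollary~\ref{cor:lowerwhp} and the known $\amos \in \dM[\Omega(\log n)]$ of~\cite{goos2016locally} we already have $\amos \in \dAM^{\pub}_\eps[k, \Omega(\log\log n)]$ in the stringent regime $\eps < 1/(n+1)$; the real task is to match this with the standard constant error $\eps = 1/3$. My approach would be to reduce a hypothetical $\dAM^{\pub}_{1/3}[k, f(n)]$-protocol to an $\MA^{\sym}$-style protocol for equality of $\Theta(\log n)$-bit identifiers. Concretely, I would place two selected nodes $u, v$ at the two endpoints of a long induced path inside an $n$-node graph, identify Alice with $u$ and Bob with $v$, and absorb the interior path (its identifiers, its portions of the shared-random strings, its certificates, and its neighbor-to-neighbor broadcasts) into the referee. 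The induced $\MA^{\sym}$-protocol would have communication $O(k f(n))$ per party, and a $\Omega(\log\log n)$ lower bound on this equality task in the shared-randomness-plus-Merlin setting would then force $f(n) = \Omega(\log\log n)$.

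The main obstacle lies in this lower-bound reduction, for two reasons. First, the $\dAM^{\pub}$ verification round allows each node to broadcast to all of its neighbors, whereas in $\MA^{\sym}$ Alice and Bob must not communicate directly; the ``long path'' construction is precisely designed to prevent the broadcasts of $u$ and $v$ from meeting at any common interior node, but one must verify that the protocol's correctness really does depend only on short-range broadcasts. Second, the accept/reject semantics differ: $\dAM^{\pub}$ rejects when at least one node rejects, whereas the $\MA^{\sym}$ referee must output a single bit. Translating the former into the latter requires the referee to aggregate the simulated decisions of the interior nodes with Alice's and Bob's broadcast messages, and the constant-error guarantee of the original protocol must be carefully tracked through this aggregation to yield a constant-error $\MA^{\sym}$-protocol.
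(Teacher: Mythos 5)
Your upper bound is fine and is exactly the paper's: Lemma~\ref{lem:AMOSloglogupper} plus padding with ignored rounds for larger $k$. The substance of Theorem~\ref{thm:amos}, however, is the $\Omega(\log\log n)$ lower bound at constant error $\eps=1/3$, and there your plan has a genuine gap. It rests on an unproven two-party statement: that an equality-type problem on $\Theta(\log n)$-bit inputs requires $\Omega(\log\log n)$ bits in a simultaneous-messages model where Alice, Bob and the referee share public randomness that is also visible to the prover. Nothing in the paper supplies this (Proposition~\ref{babai-kimmel} and Theorem~\ref{lowerMAsym} are for \emph{private} coins), and it is far from clear it can be proved by standard techniques: with shared randomness and no prover, \textsc{Eq} costs $O(1)$ bits, so any such bound would have to exploit soundness against a prover who sees the coins, i.e.\ an Arthur--Merlin communication lower bound, a much more delicate object. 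Worse, the reduction itself would not yield an equality instance on $\Theta(\log n)$-bit inputs: in an \amos\ configuration the identifiers are part of the fixed instance, and the only thing that varies at your two endpoint nodes is the one-bit selected/unselected label, so the two-party problem you extract is essentially a two-bit AND, which has an $O(1)$ protocol and can certify no superconstant bound. The hardness of \amos\ at bandwidth $o(\log n)$ is not a communication bottleneck between the two selected nodes.

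The paper proves the lower bound by a completely different, self-contained gluing argument in the style of G\"o\"os and Suomela, adapted to shared randomness. For a $k$-round protocol with bandwidth $K=\delta\log\log n$, Merlin's entire behaviour on the four ``bridge'' nodes joining two blocks $F_a,F_b$ is a function from the at most $2^{Kk}$ shared random strings to $O(Kk)$-bit certificate tuples, hence one of at most $2^{Kk2^{Kk}}$ objects, and for $\delta$ small this is $o(n^{1/2})$. Colouring the $n^2$ yes-instances $G(F_a,F_b)$ by this function and pigeonholing gives a large monochromatic bipartite edge set; Bondy--Simonovits then yields a $4$-cycle $a_1,b_1,a_2,b_2$, and gluing $F_{a_1},F_{b_1},F_{a_2},F_{b_2}$ along two bridges produces a no-instance with two selected nodes in which every node sees, for every shared coin, exactly the local view of some yes-instance, hence accepts two thirds of the coins, contradicting soundness. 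The enumeration-of-certificate-functions step --- possible precisely because the randomness is shared and therefore finitely enumerable within the bandwidth budget --- is the key idea your proposal is missing; the SM-reduction machinery you invoke is what the paper uses for the $\dMA^{\priv}$ lower bound for \twocoleq\ (Lemma~\ref{2colEQ}), not for this theorem.
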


\begin{proof}
	We follow a tecnique by \cite{goos2016locally} for "glueing" solutions together to form a bad instance, with some modifications.
	Without loss of generality, we may assume that $n$ is even. Let $A$ be a partition of $[1,n^2]$ into $n$ sets of size $n$, and let $B$ be a partition of $[n^2+1 , 2n^2]$ in a similar manner. Let $\mathcal{G}$ be a family of $n$-node graphs.
	
	Set now $\mathcal{G}_A $ to be the set of labeled graphs in $\mathcal{G}$, with labeled picked from $A$. Let $F_a$ be a graph in $G_A$, and let $v^*$ be the vertex of $F_a$ labelled with the smallest label. Consider the input $I_a$ such that $I_a(v) = 0$ for every vertex $v\in V(F_a)$ except for the vertex $v^*$, for which $I_a(v^*) = 1$. Similarly, we define $\mathcal{G}_B$ as the set of graphs in $\mathcal{G}$ labelled with labels in $B$. For each graph $F_b \in  \mathcal{G}_B$, we define the input $I_b$ such that $I_b(v) = 0$ for all $v\in V(F_b)$.

	For $(F_a,F_b) \in \mathcal{G}_A\times \mathcal{G}_B$ let $G(F_a,F_b)$ be the graph defined by the disjoint union of graphs  $F_a$ and $F_b$ plus four additional nodes $x_A, y_A, x_B, y_B$. These nodes are labeled with different numbers in the set $C = [2n^2 +1, 3n^2]$. Nodes $x_A$ and $x_B$ are both adjacent only to $y_A$ and $y_B$. Node $y_A$ is adjacent only to $x_A, x_B$ and $v_a$, where $v_a$ is some node of $F_a$. Node $y_B$ is adjacent only to $x_A, x_B$ and $v_b$, where $v_b$ is some node of $F_b$.  Observe  that $v^* \neq v_a$ and all nodes in $F_a$ communicate with $F_b$ only through the nodes $x_A, y_A, x_B, y_B$.

	Consider now the input $I$ of $G(F_a, F_b)$ such that $I(v) = I_a(v)$ if $v\in V(F_a)$, $I(v) = 0$ otherwise. Observe that  $v^* \in V(F_a)$ is the only node in $V(G(F_a, F_b))$ satisfying $I(v^*)= 1$. Therefore, the graph $G=G(F_a,F_b) $ is a Yes-instance of \textsc{amos}.
	
	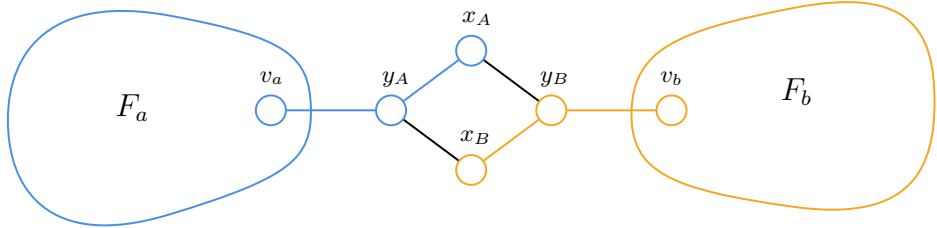
\begin{figure}[!ht]
		\centering

\tikzset{every picture/.style={line width=0.75pt}} 

\begin{tikzpicture}[x=0.75pt,y=0.75pt,yscale=-1,xscale=1]

\draw  [color={rgb, 255:red, 74; green, 144; blue, 226 }  ,draw opacity=1 ] (109.72,102.41) .. controls (168.72,113.41) and (179.22,122.91) .. (179.97,148.66) .. controls (180.72,174.41) and (169.72,183.91) .. (109.22,202.41) .. controls (48.72,220.91) and (28.47,189.16) .. (28.72,154.41) .. controls (28.97,119.66) and (50.72,91.41) .. (109.72,102.41) -- cycle ;
\draw  [color={rgb, 255:red, 245; green, 166; blue, 35 }  ,draw opacity=1 ] (421.03,98.48) .. controls (480.03,86.98) and (491.03,110.98) .. (491.28,144.73) .. controls (491.53,178.48) and (480.53,206.98) .. (420.53,198.48) .. controls (360.53,189.98) and (339.03,181.48) .. (340.03,150.48) .. controls (341.03,119.48) and (362.03,109.98) .. (421.03,98.48) -- cycle ;
\draw [color={rgb, 255:red, 74; green, 144; blue, 226 }  ,draw opacity=1 ]   (160,150) -- (220,150) ;
\draw [color={rgb, 255:red, 74; green, 144; blue, 226 }  ,draw opacity=1 ]   (260,120) -- (220,150) ;
\draw    (260,120) -- (300,150) ;
\draw [color={rgb, 255:red, 245; green, 166; blue, 35 }  ,draw opacity=1 ]   (300,150) -- (260,180) ;
\draw    (260,180) -- (220,150) ;
\draw [color={rgb, 255:red, 245; green, 166; blue, 35 }  ,draw opacity=1 ]   (360,150) -- (300,150) ;
\draw  [color={rgb, 255:red, 74; green, 144; blue, 226 }  ,draw opacity=1 ][fill={rgb, 255:red, 255; green, 255; blue, 255 }  ,fill opacity=1 ] (212.5,150) .. controls (212.5,145.86) and (215.86,142.5) .. (220,142.5) .. controls (224.14,142.5) and (227.5,145.86) .. (227.5,150) .. controls (227.5,154.14) and (224.14,157.5) .. (220,157.5) .. controls (215.86,157.5) and (212.5,154.14) .. (212.5,150) -- cycle ;
\draw  [color={rgb, 255:red, 74; green, 144; blue, 226 }  ,draw opacity=1 ][fill={rgb, 255:red, 255; green, 255; blue, 255 }  ,fill opacity=1 ] (152.5,150) .. controls (152.5,145.86) and (155.86,142.5) .. (160,142.5) .. controls (164.14,142.5) and (167.5,145.86) .. (167.5,150) .. controls (167.5,154.14) and (164.14,157.5) .. (160,157.5) .. controls (155.86,157.5) and (152.5,154.14) .. (152.5,150) -- cycle ;
\draw  [color={rgb, 255:red, 74; green, 144; blue, 226 }  ,draw opacity=1 ][fill={rgb, 255:red, 255; green, 255; blue, 255 }  ,fill opacity=1 ] (252.5,120) .. controls (252.5,115.86) and (255.86,112.5) .. (260,112.5) .. controls (264.14,112.5) and (267.5,115.86) .. (267.5,120) .. controls (267.5,124.14) and (264.14,127.5) .. (260,127.5) .. controls (255.86,127.5) and (252.5,124.14) .. (252.5,120) -- cycle ;
\draw  [color={rgb, 255:red, 245; green, 166; blue, 35 }  ,draw opacity=1 ][fill={rgb, 255:red, 255; green, 255; blue, 255 }  ,fill opacity=1 ] (252.5,180) .. controls (252.5,175.86) and (255.86,172.5) .. (260,172.5) .. controls (264.14,172.5) and (267.5,175.86) .. (267.5,180) .. controls (267.5,184.14) and (264.14,187.5) .. (260,187.5) .. controls (255.86,187.5) and (252.5,184.14) .. (252.5,180) -- cycle ;
\draw  [color={rgb, 255:red, 245; green, 166; blue, 35 }  ,draw opacity=1 ][fill={rgb, 255:red, 255; green, 255; blue, 255 }  ,fill opacity=1 ] (292.5,150) .. controls (292.5,145.86) and (295.86,142.5) .. (300,142.5) .. controls (304.14,142.5) and (307.5,145.86) .. (307.5,150) .. controls (307.5,154.14) and (304.14,157.5) .. (300,157.5) .. controls (295.86,157.5) and (292.5,154.14) .. (292.5,150) -- cycle ;
\draw  [color={rgb, 255:red, 245; green, 166; blue, 35 }  ,draw opacity=1 ][fill={rgb, 255:red, 255; green, 255; blue, 255 }  ,fill opacity=1 ] (352.5,150) .. controls (352.5,145.86) and (355.86,142.5) .. (360,142.5) .. controls (364.14,142.5) and (367.5,145.86) .. (367.5,150) .. controls (367.5,154.14) and (364.14,157.5) .. (360,157.5) .. controls (355.86,157.5) and (352.5,154.14) .. (352.5,150) -- cycle ;

\draw (81,140.4) node [anchor=north west][inner sep=0.75pt]  [font=\Large]  {$F_{a}$};
\draw (413,132.4) node [anchor=north west][inner sep=0.75pt]  [font=\Large]  {$F_{b}$};
\draw (153,129) node [anchor=north west][inner sep=0.75pt]  [font=\small]  {$v_{a}$};
\draw (214,129) node [anchor=north west][inner sep=0.75pt]  [font=\small]  {$y_{A}$};
\draw (254,99) node [anchor=north west][inner sep=0.75pt]  [font=\small]  {$x_{A}$};
\draw (293,129) node [anchor=north west][inner sep=0.75pt]  [font=\small]  {$y_{B}$};
\draw (253,159) node [anchor=north west][inner sep=0.75pt]  [font=\small]  {$x_{B}$};
\draw (353,129) node [anchor=north west][inner sep=0.75pt]  [font=\small]  {$v_{b}$};

\end{tikzpicture}

		\caption{The auxiliary graph $G(F_a,F_b)$, with $a\in A$ and $b\in B$. This is a yes-instance for \textsc{amos}, as there is only one selected node (in $F_a$).}
		\label{fig:amos}
	\end{figure}
 
 Let $\mathcal{P}$ be a $k$-round distributed interactive proof with shared randomness verifying $\textsc{amos}$ with bandwidth $K = \delta \log \log n$ and error probability $\eps$. 
 Let us call $\{x_A,y_A,x_B,y_B\}$ the \emph{bridge} of $G(F_a, F_b)$. We can assume, without loss of generality, that $\mathcal{P}$ satisfies that for any (shared) random string generated by Arthur, the nodes in the bridge $\{x_A,y_A,x_B,y_B\}$ receive the same proof. Indeed, if we have a protocol that is not simple, with cost  $L$, we can design a new protocol that is simple and whose proof has length $4L$ by making each node pick their portion of the proof and going by the original protocol afterwards.
	
Given sequence of random strings $r=(r_1,r_2,\dots r_k)$, we call $m^r$ the sequence indexed by vertices $v\in V(G[F_a,F_b])$, such that $m^r_v$ is the set of certificates that Merlin sends to node $v$ in protocol $\mathcal{P}$, when Arthur communicate string $r_i$ on round~$i$. 
	Let $m_{ab}: \{0,1\}^{Kk} \to \{0,1\}^{Kk}$ be the function that associates to each sequence $r=(r_1,r_2,\dots r_k)$ the tuple $(m^r_{x_A}, m^r_{x_B}, m^r_{y_A}, m^r_{y_B})$ such that it extends to a proof assignment for the nodes in both $F_{a}$ and $F_{b}.$ that make them accept whenever the bridge accepts.
	
	
	
	Now consider the complete bipartite graph $\hat{G} = A \cup B$. For each $a\in A$ and  $b\in B$, color the edge $\{a,b\}$ with function $m_{ab}$. There are at most $2^{Kk2^{Kk}}$ possible functions. Therefore, by the pigeonhole principle, there exists a monochromatic set of edges $W$ of size at least $\frac{n^2}{2^{Kk2^{Kk}}}$. 
	
	Observe that  for sufficiently small $\delta$ and large $n$, $2^{Kk2^{Kk}} = (\log n)^{\delta k \log^{\delta k} (n)} = o(n^{1/2})$. Indeed, if $n > 2^{k\delta}$ and $\delta < 1/(4k)$  have that $ \delta k\log^{\delta k}(n) \log \log (n )\leq \log^{2\delta k}(n) <\frac{1}{2} \log n$.  Following  a result of Bondy and Simonovits  given in \cite{bondy1974cycles}, we have that there exists a $4$-cycle $a_1,b_1, a_2, b_2$  in the subgraph $\hat{G}$  induced by $W$. 
	
	Consider now the graph $G(a_1, b_1, a_2, b_2)$ defined as follows: First. take a disjoint union of $F_{a_1}, F_{b_1}, F_{a_2}$ and $F_{b_2}$. Then, for each $i \in \{1, 2\}$ add nodes $x_A^{i}, x_B^{i}, y_A^{i}, y_B^{i}$, labelled with different labels in $[2n^2+1, 3n^2]$  correspondant to the yes instances formed by $F_{a_i}$ and $F_{b_j}$. 	For each $i\in \{1,2\}$, the node $y_A^{i}$  is adjacent to $x_A^i$, $x_B^{i+1}$ and the node $v_{a_i}$ of $F_{a_i}$. Similarly $y_B^i$ is adjacent to the $x_A^i$, $x_B^{i+ 1}$ and the node $v_{b_i}$ of $F_{b_i}$. Where the $i+1$ is taken$\mod 2$. Finally, define the input $I$ as $I(v) = I_{a_i}(v)$ if $v$ belongs to $F_{a_i}$ and $I(v) = 0$ otherwise. 
	
	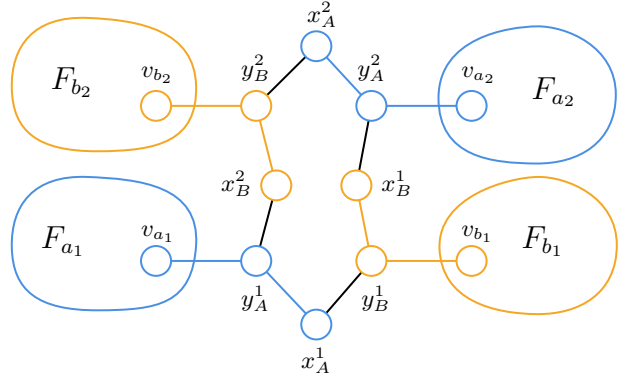
\begin{figure}[!ht]
		\centering

\tikzset{every picture/.style={line width=0.75pt}} 

\begin{tikzpicture}[x=0.75pt,y=0.75pt,yscale=-1,xscale=1]

\draw [color={rgb, 255:red, 245; green, 166; blue, 35 }  ,draw opacity=1 ]   (160,190) -- (210,190) ;
\draw  [color={rgb, 255:red, 245; green, 166; blue, 35 }  ,draw opacity=1 ][fill={rgb, 255:red, 255; green, 255; blue, 255 }  ,fill opacity=1 ] (152.5,190) .. controls (152.5,185.86) and (155.86,182.5) .. (160,182.5) .. controls (164.14,182.5) and (167.5,185.86) .. (167.5,190) .. controls (167.5,194.14) and (164.14,197.5) .. (160,197.5) .. controls (155.86,197.5) and (152.5,194.14) .. (152.5,190) -- cycle ;
\draw  [color={rgb, 255:red, 245; green, 166; blue, 35 }  ,draw opacity=1 ] (125.5,145.75) .. controls (152,146.75) and (179.33,152.08) .. (180,180) .. controls (180.67,207.92) and (161,212.75) .. (130.83,212.83) .. controls (100.67,212.92) and (88.5,202.75) .. (88,181.25) .. controls (87.5,159.75) and (99,144.75) .. (125.5,145.75) -- cycle ;
\draw  [color={rgb, 255:red, 74; green, 144; blue, 226 }  ,draw opacity=1 ] (350,150) .. controls (364.5,149.75) and (389.5,160) .. (389.5,184.25) .. controls (389.5,208.5) and (371,219.25) .. (349.5,219) .. controls (328,218.75) and (301,216.25) .. (301,186) .. controls (301,155.75) and (335.5,150.25) .. (350,150) -- cycle ;
\draw [color={rgb, 255:red, 74; green, 144; blue, 226 }  ,draw opacity=1 ]   (317.5,190) -- (267.5,190) ;
\draw  [color={rgb, 255:red, 74; green, 144; blue, 226 }  ,draw opacity=1 ][fill={rgb, 255:red, 255; green, 255; blue, 255 }  ,fill opacity=1 ] (310,190) .. controls (310,185.86) and (313.36,182.5) .. (317.5,182.5) .. controls (321.64,182.5) and (325,185.86) .. (325,190) .. controls (325,194.14) and (321.64,197.5) .. (317.5,197.5) .. controls (313.36,197.5) and (310,194.14) .. (310,190) -- cycle ;
\draw [color={rgb, 255:red, 0; green, 0; blue, 0 }  ,draw opacity=1 ]   (210,190) -- (240,160) ;
\draw [color={rgb, 255:red, 74; green, 144; blue, 226 }  ,draw opacity=1 ]   (240,160) -- (267.5,190) ;
\draw  [color={rgb, 255:red, 74; green, 144; blue, 226 }  ,draw opacity=1 ][fill={rgb, 255:red, 255; green, 255; blue, 255 }  ,fill opacity=1 ] (232.5,160) .. controls (232.5,155.86) and (235.86,152.5) .. (240,152.5) .. controls (244.14,152.5) and (247.5,155.86) .. (247.5,160) .. controls (247.5,164.14) and (244.14,167.5) .. (240,167.5) .. controls (235.86,167.5) and (232.5,164.14) .. (232.5,160) -- cycle ;
\draw [color={rgb, 255:red, 0; green, 0; blue, 0 }  ,draw opacity=1 ]   (210,268) -- (220,230) ;
\draw [color={rgb, 255:red, 245; green, 166; blue, 35 }  ,draw opacity=1 ]   (220,230) -- (210,190) ;
\draw  [color={rgb, 255:red, 245; green, 166; blue, 35 }  ,draw opacity=1 ][fill={rgb, 255:red, 255; green, 255; blue, 255 }  ,fill opacity=1 ] (212.5,230) .. controls (212.5,225.86) and (215.86,222.5) .. (220,222.5) .. controls (224.14,222.5) and (227.5,225.86) .. (227.5,230) .. controls (227.5,234.14) and (224.14,237.5) .. (220,237.5) .. controls (215.86,237.5) and (212.5,234.14) .. (212.5,230) -- cycle ;
\draw [color={rgb, 255:red, 74; green, 144; blue, 226 }  ,draw opacity=1 ]   (160,268) -- (210,268) ;
\draw  [color={rgb, 255:red, 74; green, 144; blue, 226 }  ,draw opacity=1 ][fill={rgb, 255:red, 255; green, 255; blue, 255 }  ,fill opacity=1 ] (152.5,268) .. controls (152.5,263.86) and (155.86,260.5) .. (160,260.5) .. controls (164.14,260.5) and (167.5,263.86) .. (167.5,268) .. controls (167.5,272.14) and (164.14,275.5) .. (160,275.5) .. controls (155.86,275.5) and (152.5,272.14) .. (152.5,268) -- cycle ;
\draw  [color={rgb, 255:red, 74; green, 144; blue, 226 }  ,draw opacity=1 ] (125.5,225.75) .. controls (152,226.75) and (181.83,230.08) .. (180,260) .. controls (178.17,289.92) and (161,292.75) .. (130.83,292.83) .. controls (100.67,292.92) and (88.5,282.75) .. (88,261.25) .. controls (87.5,239.75) and (99,224.75) .. (125.5,225.75) -- cycle ;
\draw  [color={rgb, 255:red, 245; green, 166; blue, 35 }  ,draw opacity=1 ] (350.5,225.75) .. controls (365,225.5) and (390,235.75) .. (390,260) .. controls (390,284.25) and (371.5,295) .. (350,294.75) .. controls (328.5,294.5) and (302.5,288) .. (301.5,261.75) .. controls (300.5,235.5) and (336,226) .. (350.5,225.75) -- cycle ;
\draw [color={rgb, 255:red, 245; green, 166; blue, 35 }  ,draw opacity=1 ]   (317.5,268) -- (267.5,268) ;
\draw  [color={rgb, 255:red, 245; green, 166; blue, 35 }  ,draw opacity=1 ][fill={rgb, 255:red, 255; green, 255; blue, 255 }  ,fill opacity=1 ] (310,268) .. controls (310,263.86) and (313.36,260.5) .. (317.5,260.5) .. controls (321.64,260.5) and (325,263.86) .. (325,268) .. controls (325,272.14) and (321.64,275.5) .. (317.5,275.5) .. controls (313.36,275.5) and (310,272.14) .. (310,268) -- cycle ;
\draw [color={rgb, 255:red, 74; green, 144; blue, 226 }  ,draw opacity=1 ]   (210,268) -- (240,300) ;
\draw [color={rgb, 255:red, 0; green, 0; blue, 0 }  ,draw opacity=1 ]   (240,300) -- (267.5,268) ;
\draw  [color={rgb, 255:red, 74; green, 144; blue, 226 }  ,draw opacity=1 ][fill={rgb, 255:red, 255; green, 255; blue, 255 }  ,fill opacity=1 ] (232.5,300) .. controls (232.5,295.86) and (235.86,292.5) .. (240,292.5) .. controls (244.14,292.5) and (247.5,295.86) .. (247.5,300) .. controls (247.5,304.14) and (244.14,307.5) .. (240,307.5) .. controls (235.86,307.5) and (232.5,304.14) .. (232.5,300) -- cycle ;
\draw  [color={rgb, 255:red, 74; green, 144; blue, 226 }  ,draw opacity=1 ][fill={rgb, 255:red, 255; green, 255; blue, 255 }  ,fill opacity=1 ] (202.5,268) .. controls (202.5,263.86) and (205.86,260.5) .. (210,260.5) .. controls (214.14,260.5) and (217.5,263.86) .. (217.5,268) .. controls (217.5,272.14) and (214.14,275.5) .. (210,275.5) .. controls (205.86,275.5) and (202.5,272.14) .. (202.5,268) -- cycle ;
\draw [color={rgb, 255:red, 0; green, 0; blue, 0 }  ,draw opacity=1 ]   (260,230) -- (267.5,190) ;
\draw [color={rgb, 255:red, 245; green, 166; blue, 35 }  ,draw opacity=1 ]   (267.5,268) -- (260,230) ;
\draw  [color={rgb, 255:red, 245; green, 166; blue, 35 }  ,draw opacity=1 ][fill={rgb, 255:red, 255; green, 255; blue, 255 }  ,fill opacity=1 ] (252.5,230) .. controls (252.5,225.86) and (255.86,222.5) .. (260,222.5) .. controls (264.14,222.5) and (267.5,225.86) .. (267.5,230) .. controls (267.5,234.14) and (264.14,237.5) .. (260,237.5) .. controls (255.86,237.5) and (252.5,234.14) .. (252.5,230) -- cycle ;
\draw  [color={rgb, 255:red, 245; green, 166; blue, 35 }  ,draw opacity=1 ][fill={rgb, 255:red, 255; green, 255; blue, 255 }  ,fill opacity=1 ] (260,268) .. controls (260,263.86) and (263.36,260.5) .. (267.5,260.5) .. controls (271.64,260.5) and (275,263.86) .. (275,268) .. controls (275,272.14) and (271.64,275.5) .. (267.5,275.5) .. controls (263.36,275.5) and (260,272.14) .. (260,268) -- cycle ;
\draw  [color={rgb, 255:red, 74; green, 144; blue, 226 }  ,draw opacity=1 ][fill={rgb, 255:red, 255; green, 255; blue, 255 }  ,fill opacity=1 ] (260,190) .. controls (260,185.86) and (263.36,182.5) .. (267.5,182.5) .. controls (271.64,182.5) and (275,185.86) .. (275,190) .. controls (275,194.14) and (271.64,197.5) .. (267.5,197.5) .. controls (263.36,197.5) and (260,194.14) .. (260,190) -- cycle ;
\draw  [color={rgb, 255:red, 245; green, 166; blue, 35 }  ,draw opacity=1 ][fill={rgb, 255:red, 255; green, 255; blue, 255 }  ,fill opacity=1 ] (202.5,190) .. controls (202.5,185.86) and (205.86,182.5) .. (210,182.5) .. controls (214.14,182.5) and (217.5,185.86) .. (217.5,190) .. controls (217.5,194.14) and (214.14,197.5) .. (210,197.5) .. controls (205.86,197.5) and (202.5,194.14) .. (202.5,190) -- cycle ;

\draw (234,136) node [anchor=north west][inner sep=0.75pt]  [font=\small]  {$x_{A}^{2}$};
\draw (151,249) node [anchor=north west][inner sep=0.75pt]  [font=\small]  {$v_{a_{1}}$};
\draw (106,169) node [anchor=north west][inner sep=0.75pt]  [font=\Large]  {$F_{b_{2}}$};
\draw (346,174) node [anchor=north west][inner sep=0.75pt]  [font=\Large]  {$F_{a_{2}}$};
\draw (101,249) node [anchor=north west][inner sep=0.75pt]  [font=\Large]  {$F_{a_{1}}$};
\draw (341,249) node [anchor=north west][inner sep=0.75pt]  [font=\Large]  {$F_{b_{1}}$};
\draw (259,163) node [anchor=north west][inner sep=0.75pt]  [font=\small]  {$y_{A}^{2}$};
\draw (231,310) node [anchor=north west][inner sep=0.75pt]  [font=\small]  {$x_{A}^{1}$};
\draw (201,279) node [anchor=north west][inner sep=0.75pt]  [font=\small]  {$y_{A}^{1}$};
\draw (311,169) node [anchor=north west][inner sep=0.75pt]  [font=\small]  {$v_{a_{2}}$};
\draw (201,163) node [anchor=north west][inner sep=0.75pt]  [font=\small]  {$y_{B}^{2}$};
\draw (261,279) node [anchor=north west][inner sep=0.75pt]  [font=\small]  {$y_{B}^{1}$};
\draw (191,221) node [anchor=north west][inner sep=0.75pt]  [font=\small]  {$x_{B}^2$};
\draw (271,221) node [anchor=north west][inner sep=0.75pt]  [font=\small]  {$x_{B}^1$};
\draw (311,249) node [anchor=north west][inner sep=0.75pt]  [font=\small]  {$v_{b_{1}}$};
\draw (151,169) node [anchor=north west][inner sep=0.75pt]  [font=\small]  {$v_{b_{2}}$};

\end{tikzpicture}

		\caption{A No-instance for \textsc{amos}, with messages of size $o(\log n)$. From every node's perspective, the graph behaves like a Yes-instance, but there are two selected nodes.}
		\label{fig:fakeAmos}
	\end{figure}
	
	
	Observe that $G(a_1, b_1, a_2, b_2)$ is a No-instance of \textsc{amos}, as there are two selected vertices: one in $V(F_{a_1})$ and another one in $V(F_{a_2})$. Moreover,  all nodes of the set $x_A^{1}, x_B^{1}, y_A^{1}, y_B^{1}$, $x_A^{2}, x_B^{2}, y_A^{2}, y_B^{2}$ receive the same answers by Merlin which extends to assignments for the nodes in $F_{a_i}$ and $F_{b_i}$ that make them accept with the same probability as the nodes in the bridge, as they locally place themselves in a previously defined yes instance. Therefore all vertices accept two thirds of all possible random coins. This contradicts the fact that $\mathcal{P}$ was a correct distributed interactive proof for $\textsc{amos}$.
\end{proof} 

 
\section{\texorpdfstring{dMA$^{\pub}$ v\/s dMA$^{\priv}$}{shared dMA vs private dMA}}\label{sec:ma}
In this section we first show that, in what regards private and shared randomness, roles are reversed when we address \dMA\ protocols instead of \dAM\ protocols.
In fact, we get a result analogous to that of Crescenzi, Fraigniaud, and Paz~\cite{crescenzi2019trade} (Proposition~\ref{prop:cresc})
by proving   that \dMA\ protocols with shared randomness are more powerful than \dMA\ protocols with private randomness.

\begin{theorem}\label{dMAprivToPub}
{Let  $\eps, \delta > 0$ with $\eps+\delta < \frac{1}{2}$} and consider ${\mathcal L}$ to be a language over $n$-node graphs such that ${\mathcal L} \in \dMA^{\priv}_\eps[f(n)]$. Then,  ${\mathcal L} \in  \dAM^{\pub}_{\eps + \delta}[f(n) + \log n + \log (\delta^{-1})]$.
\end{theorem}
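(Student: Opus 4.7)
The plan is to apply a Newman-style compression of the per-node private randomness in the given $\dMA^{\priv}_{\eps}[f(n)]$ protocol, trading the $n$-fold independent randomness for a single shared random index of logarithmic length in the target $\dAM^{\pub}$ protocol. I view the protocol $\mathcal{P}$ as producing, on each instance $(G,\id,I)$ of size $n$, a deterministic acceptance indicator $\mathrm{Acc}(c,\vec{r})\in\{0,1\}$ depending on the certificate assignment $c\colon V(G)\to\{0,1\}^{f(n)}$ and the vector $\vec{r}=(r_v)_{v\in V(G)}$ of per-node random strings, so that the completeness and soundness of $\mathcal{P}$ become the familiar $\exists c$ / $\forall c$ statements about $\Pr_{\vec{r}}[\mathrm{Acc}(c,\vec{r})=1]$.

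The first step is to establish, via Chernoff concentration together with a union bound over the at most $2^{\poly(n)}$ instances of size $n$ and the at most $2^{n f(n)}$ certificate assignments, the existence of a family $\mathcal{R}=\{\vec{r}^{(1)},\ldots,\vec{r}^{(T)}\}$ of $n$-tuples, with $T=\poly(n,\delta^{-1})$, such that for every instance-certificate pair $(G,\id,I,c)$ the empirical acceptance rate on $\mathcal{R}$ is within $\delta/2$ of the true $\Pr_{\vec{r}}[\mathrm{Acc}(c,\vec{r})=1]$. A pointer into $\mathcal{R}$ then fits in $\log T = O(\log n + \log(\delta^{-1}))$ bits. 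From $\mathcal{R}$ one defines the $\dAM^{\pub}$ protocol in the obvious way: Arthur samples a shared index $i\in[T]$ and broadcasts it to Merlin; Merlin replies to each $v$ with a certificate $c(v)\in\{0,1\}^{f(n)}$; each node $v$ then sets its randomness to the precomputed $r_v^{(i)}$ and runs $\mathcal{P}$'s deterministic verification. The per-message bandwidth is $f(n)+O(\log n+\log(\delta^{-1}))$, and completeness is immediate since the honest certificate $c^*$ is independent of $i$, giving $\Pr_i[\mathrm{Acc}(c^*,\vec{r}^{(i)})=1]\ge 1-\eps-\delta/2$ by the Newman property.

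The main obstacle is soundness: because in $\dAM^{\pub}$ the prover sees $i$ before committing, one must control the stronger quantity $\Pr_i[\exists c:\mathrm{Acc}(c,\vec{r}^{(i)})=1]$ rather than a per-$c$ probability, and a naive union bound over the $2^{n f(n)}$ certificates would be insufficient on its own. The delicate step is to strengthen the Newman construction so that the approximation inequality holds uniformly in $c$: this is precisely what the $\log n+\log(\delta^{-1})$ bandwidth slack pays for, matching up to constants the logarithm of the number of certificate assignments once the cert length $f(n)$ is already accounted for. With this stronger property of $\mathcal{R}$ in hand, the adaptive Merlin can make the verifier accept on at most an $(\eps+\delta/2)$-fraction of indices $i$, giving the claimed soundness error $\eps+\delta$ and hence $\mathcal{L}\in\dAM^{\pub}_{\eps+\delta}[f(n)+\log n+\log(\delta^{-1})]$.
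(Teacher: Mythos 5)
Your construction of the sample set $\mathcal{R}$ is exactly the paper's argument (Chernoff plus a union bound over the at most $2^{\poly(n)}$ configurations and the $2^{nf(n)}$ certificate assignments, giving $t=\poly(n,\delta^{-1})$ coin tuples and a pointer of $O(\log n+\log\delta^{-1})$ bits), and your completeness step is fine. The genuine gap is the soundness step you flag yourself and then wave away. The ``uniform in $c$'' property of $\mathcal{R}$ says: for every \emph{fixed} certificate assignment $c$, the empirical acceptance rate $\frac1t\sum_i \mathrm{Acc}(c,\vec r^{(i)})$ is within $\delta/2$ of $\Pr_{\vec r}[\mathrm{Acc}(c,\vec r)]\le\eps$. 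This controls $\max_c \Pr_i[\mathrm{Acc}(c,\vec r^{(i)})]$, not $\Pr_i[\exists c:\mathrm{Acc}(c,\vec r^{(i)})]$, and the two can differ wildly: it is perfectly consistent with the uniform property (and with the soundness of $\mathcal{P}$, which only bounds per-$c$ probabilities) that for every index $i$ there is some $c_i$ accepted under $\vec r^{(i)}$, each such $c_i$ accepting only that one index. Then an adaptive Merlin who sees $i$ wins with probability $1$. The $\log n+\log(\delta^{-1})$ slack does not ``pay for'' this quantifier switch — it is only the log of the sample size needed to build $\mathcal{R}$; to genuinely tolerate a prover who sees the coin you would have to amplify the soundness of $\mathcal{P}$ below $2^{-nf(n)}$ before sampling, which multiplies the bandwidth by roughly $nf(n)$ and destroys the claimed bound.

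The paper avoids this issue entirely by keeping the Merlin-first order: in its derived protocol the prover commits the certificate $m$ \emph{before} the shared index $i$ is drawn, and the index is used only in the (now coin-indexed, deterministic) verification; soundness is then immediate from the per-$c$ uniform property, and the proof in fact concludes membership in $\dMA^{\pub}[f(n)+\log n+\log(\delta^{-1})]$ (the ``$\dAM^{\pub}$'' in the theorem statement notwithstanding, the protocol constructed is Merlin-then-shared-coin). If you make the same change — let Merlin send the certificate first and only then draw the shared index — the rest of your argument goes through essentially verbatim; as written, with Arthur moving first and Merlin replying after seeing $i$, the soundness claim is unsupported.
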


\begin{proof}
	Let $\mathcal{L}$ be a distributed language in $\dMA^{\priv}_\eps[f(n)]$ over a network configuration $\mathcal{I} = (G, \id, I)$, and let $\mathcal{P}$ the protocol that witnesses that membership. 
	
	Let $Z( \mathcal{I}, m, r)$ be a random variable that equals to $1$ if and only if \textit{Arthur} is wrong about the membership of $\mathcal{I}$ in $\mathcal{L}$ given the proof $m$ and the coin $r$ in an occurrence of the protocol $\mathcal{P}$. Observe that both $r$ and $m$ are a sequence of $n\cdot f(n)$ bits, with the $i$-th portion of the sequence containing the message sent or received by the node identified as the $i$-th node of $G$.

	We show by the probabilistic method that there exists a collection $\{r_i\}_{i=1}^t$ of random strings such that, for all network configuration $\mathcal{I}$, we can design a correct protocol that only relies on these coins, with a small increase in error.
	
	
	
	Indeed, let $\{r_i\}_{i=1}^t$ be a collection of random strings of length $f(n)$ and consider a network configuration $\mathcal{I}$ such that $\mathcal{I} \in \mathcal{L}$ we define the event
	\[Y_\mathcal{I}  = \{ \forall m\in  \{0,1\}^{n\cdot f(n) }\quad\mathbf{E}_i ( Z(\mathcal{I}, m , r_i)) > \varepsilon + \delta\}\] 
	where $\mathbf{E}_i (\cdot)$ is the expected value over the collection of random coins mentioned. We also define, for $\mathcal{I}\notin \mathcal{L}$, the event
	\[ N_\mathcal{I}  = \{ \exists m \in \{0,1\}^{n\cdot f(n)} \:\text{s.t. } \:\mathbf{E}_i ( Z(\mathcal{I}, m , r_i)) > \varepsilon + \delta\}\]
	Now, by the correctness of $\mathcal{P}$, we have that for ${\mathcal{I}}\in \mathcal{L}$ there exists a proof $m_{\mathcal{I}}$ such that {Arthur} errs with small probability, therefore we have that $\mathbf{E}_r (Z(\mathcal{I}, m_{\mathcal{I}}, r)) \leq \eps$ and by a Chernoff bound:
	\[ \mathbf{Pr} (Y_\mathcal{I}  ) \leq  \mathbf{Pr} \left[\left(\frac{1}{t} \sum_{i=1}^{t}  Z(G, m_{\mathcal{I}}, r_i) - \eps \right) > \delta \right]\leq 2e^{-2\delta^2 t} \]
	
	Now consider the case when ${\mathcal{I}}\notin \mathcal{L}$, for the correctness of $\mathcal{P}$ with obtain that $\mathbf{E}_r (Z(\mathcal{I}, m_{\mathcal{I}}, r)) \leq \eps$ for any $m \in \{0,1\}^{n\cdot f(n)}$. Thus, by an union bound and another Chernoff bound, we obtain:
	\[ \textbf{Pr} (N_\mathcal{I} ) = \textbf{Pr} \left[ \exists m \in \{0,1\}^{n \cdot f(n)}\; \text{s.t. } \left(\frac{1}{t}  \sum_{i=1}^{t}  Z(\mathcal{I}, m, r_i) - \eps \right) > \delta \right]\leq \ 2^{n \cdot f(n) }\cdot 2e^{-2\delta^2t }  \] 
	Therefore, if we consider $B = \bigcup_{\mathcal{I} \in \mathcal{L}} Y_\mathcal{I} \cup \bigcup_{\mathcal{I} \notin \mathcal{L}}N_\mathcal{I} $ to be the union of $Y_\mathcal{I} $ over all $\mathcal{I} \in \mathcal{L}$ and of $N_\mathcal{I} $ over all $\mathcal{I} \notin \mathcal{L}$  and make a union bound, we get:
	
	\[ \textbf{Pr}(B) \leq  \sum_{G\in \mathcal{L}}  2e^{-2\delta^2 t} + \sum_{G\notin \mathcal{L}} 2e^{-2\delta^2t} \cdot 2^{n \cdot f(n)}  \]
	Now, remember from the definition of a distributed language that we assume that the set of all possible network configurations defined over graphs of size $n$ is at most $2^{\poly(n)}$. Therefore, if we take $t = \Theta ( \frac{\poly(n) + n\cdot g(n)}{\delta^2} ) $ the probability of $B$ is strictly smaller than one.
	
	And so, there exists a collection $\{r_i\}_{i  = 1}^t$  such that for any network configuration $\mathcal{I}$ defined over graphs of $n$ nodes:
\begin{align*}
	&\mathcal{I} \in \mathcal{L} &\longrightarrow  \exists m \; \textbf{E}_i \left( Z(\mathcal{I},m , r_i)) \leq (\eps +\delta \right) \rightarrow \; \exists m \; \mathbf{Pr}(Z(\mathcal{I},m , r_i) = 0) > 1 - (\eps + \delta)\\
	&\mathcal{I} \notin \mathcal{L} &\longrightarrow \forall m \; \textbf{E}_i \left( Z(\mathcal{I},m , r_i)) \leq (\eps+ \delta \right) \rightarrow \; \forall m \; \mathbf{Pr}(Z(\mathcal{I},m , r_i) = 0) > 1 - (\eps + \delta)
\end{align*}
	Now we can describe a $\dMA^{\pub}$ protocol for $\mathcal{L}$: Merlin sends Arthur the proof $m$ that he would send in protocol $\mathcal{P}$. Then, Arthur proceeds to draw a random integer $i \in [t]$. Then, all nodes take their portion of $r_i$ and procede with the protocol $ \mathcal{P}$ using proof $m$ and $r_i$.  The completeness and soundness of the protocol is guaranteed by the choice of the set $\{r_i\}_{i=1}^t$. The total bandwidth of the protocol is $f(n) + \log t=f(n) + \cO( \log (n) + \log ( \delta^{-1}) + \log (f(n)) ) $ bits.
	We deduce that $\mathcal{L}$ belongs to $\dMA^{\pub}[f(n) + \log n + \log (\delta^{-1})]$.

\end{proof}

As we did in previous section for \dAM\ protocols, we are going to give here a negative answer to the question whether $\dMA^{\pub}$ and $\dMA^{\priv}$ are equivalent models.
For obtaining such separation, we use the problem $\twocoleq$. Recall that this language is the set of network configurations $(G,\id,I)$, where 
$I$ is a function $I: V(G) \rightarrow \{0,1\}^n$, such that $I$ is a proper two-coloring of $G$. In other words, $(G,\id,I)$ belongs to $\twocoleq$ if and only if there is a partition $\{V_0, V_1\}$ of $V(G)$, such that both $V_0$ and $V_1$ are inependent sets and, for all $v,w \in V_i$, we have that $I(v) = I(w)$, for $i \in \{0,1\}$.

Next lemma shows that \twocoleq\ is ``easy" to solve using shared randomness.

\begin{lemma}\label{twoColEqPub}
	$\twocoleq \in \dMA^{\pub}[\log n]$. 

\end{lemma}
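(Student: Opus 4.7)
The plan is to build a $\dMA^{\pub}$ protocol of bandwidth $O(\log n)$ for $\twocoleq$ in which Merlin declares the bipartition with one bit per node, after which the shared randomness is used to test, via polynomial fingerprints, that the two alleged sides are internally monochromatic.

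Concretely, I would fix a prime $q = \Theta(n)$, draw a single $s \in \mathbb{F}_q$ from the shared source, and let Merlin send to each node $v$ a single bit $b(v) \in \{0,1\}$ (the alleged side of $v$). Encoding the label $I(v) \in \{0,1\}^n$ as the polynomial $p_v(x) = \sum_{i=1}^{n} I(v)_i\, x^i \in \mathbb{F}_q[x]$, every node $v$ broadcasts the pair $(b(v), p_v(s))$ and accepts iff (i) $b(u) \neq b(v)$ for every neighbor $u$, and (ii) all neighbors of $v$ broadcast the same fingerprint value. Every message, and every bit of shared randomness actually used, fits in $O(\log n)$ bits.

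Completeness is immediate: on a yes-instance, Merlin sends the true bipartition; every edge is bichromatic, and, since all neighbors of any node sit in the opposite side and therefore share one common $n$-bit label, all fingerprints in any neighborhood coincide deterministically.

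For soundness, fix a no-instance $(G, \id, I)$ and any prover strategy $b$. If $b$ is not a proper $2$-coloring of $G$, then some endpoint of a monochromatic edge rejects deterministically via (i). Otherwise $b$ is a proper $2$-coloring, hence, by connectivity of $G$, coincides up to swap with the unique bipartition $\{V_0, V_1\}$; since $(G, \id, I) \notin \twocoleq$, the labeling $I$ must fail to be constant on at least one side. The crucial combinatorial step, and the only one needing care, is to promote this global failure into a local witness: taking two same-side vertices $u, u'$ with $I(u) \neq I(u')$ and any path $u = w_0, w_1, \ldots, w_{2k} = u'$ in $G$ (necessarily of even length), some index $i$ must satisfy $I(w_{2i}) \neq I(w_{2i+2})$, and then the vertex $w_{2i+1}$ has two neighbors with distinct labels. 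For such $w_{2i+1}$, the polynomials $p_{w_{2i}}$ and $p_{w_{2i+2}}$ are distinct of degree at most $n$, so Schwartz--Zippel gives $\mathbf{Pr}_s[p_{w_{2i}}(s) = p_{w_{2i+2}}(s)] \leq n/q$; choosing $q \geq 3n$ forces $w_{2i+1}$ to reject via (ii) with probability at least $2/3$. The main (mild) obstacle in the argument is precisely this promotion from a global inconsistency of $I$ to an inconsistency visible inside one node's neighborhood; the fingerprinting half is completely standard.
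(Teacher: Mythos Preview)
Your proof is correct and follows essentially the same approach as the paper: Merlin sends one bit per node encoding the bipartition, then each node broadcasts a polynomial fingerprint of its label evaluated at a shared random point, and the verification checks that edges are bichromatic and that all neighbors of any node share the same fingerprint. The only cosmetic differences are that the paper chooses a larger prime $q = \Theta(n^{c+2})$ (yielding error $1/n^{c+1}$ rather than $1/3$), and that where the paper simply asserts ``since $G$ is connected, we can choose $u,v$ with a common neighbor $w$'', you spell out the even-length path argument explicitly.
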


\begin{proof}

	The protocol is the following. First, the prover sends a single bit $c_v\in \{0,1\}$ to each node $v$, that corrresponds to the 2-coloring. 
	Then, each node considers the smallest prime $q$ such that $n^{c+2} \leq q \leq2n^{c+2}$  and constructs a polynomial 
	over the field $\mathbb{F}_q$ associated to its input $I(v)$,  given by $p_v(z) = \sum_{i=1}^n I(v)_i \:z^i$. Finally, during the communication round, all nodes generate a random string $s \in \mathbb{F}_q$ using the shared randomness, and communicate $p_v(s)$. From this exchange, each node locally verifies the consistency of the $2$-coloring and that $p_u(s)$ equals $p_w(s)$,  for every pair of neighbors $u,w$. A node accepts if both conditions are satisfied and rejects otherwise. The bandwidth of the protocol is $\cO(\log n)$ bits.
	
	\begin{itemize}
		\item \textbf{{Completeness: }} If $(G,\id,I)$ is a yes-instance of $\twocoleq$, then the input graph is bipartite. The colors  $\{c_v\}_{v\in V}$  
		and $\{I(v)\}_{v \in V}$  induce the same bipartition. Obviously, in that case, every couple of neighbors $u,w$ of $v$ satisfy
		$p_u(s) = p_w(s)$. Therefore, the nodes always accept.
		
		\item \textbf{{Soundness: }} If $G$ is not bipartite the nodes immediately reject because in that case the coloring $c$ is not consistent (i.e. $c_v = c_u$ for two adjacent vertices $u$, $v$). Suppose now that $G$ is bipartite with partition $\{V_0,V_1\}$, and suppose without loss of generality that there are two vertices $u, v$ in $V_0$ such that $I(u )\neq I(v)$. Observe that, since $G$ is connected, we can choose $u,v$ with a common neighbor $w \in V_1$. Then, $w$ receives $p_u(s)$ and $p_v(s)$ in the verification round. The probability that $p_u(s) = p_v(s)$ is at most $n/q$. Indeed,  $p(z) = p_u(z)-p_v(z)$ is a polynomial of degree at most $n$ in $\mathbb{F}_q$, and, then, it has  at most $n$ roots in $\mathbb{F}_q$. We conclude that $w$ accepts with probability at most $1/n^{c+1}$.  
		
	\end{itemize}
	
	We deduce that $\twocoleq$ belongs to $\dMA^{\pub}[\log n]$. 
\end{proof}

The goal now is to prove that $\twocoleq \in \dMA^{\priv}[\Theta(\sqrt{n})]$.
We divide this proof  in two subsections: the first for the upper bound  and the second for the lower bound.

\subsection{The upper bound}

Babai and Kimmel devise a private coin, randomized protocol in the simultaneous messages model ($\mathsf{SM}$)  that solves $\eq$ communicating 
$\cO(\sqrt{n})$ bits \cite{babai1997randomized}. Problem $\eq$ consists in deciding whether two $n$-bit boolean vectors,  the inputs of Alice and Bob,  are equal. 

\begin{proposition}[\cite{babai1997randomized}]\label{EQsym}
	There exists a pribate coin, randomized protocol in the $\mathsf{SM}${} model that solves  \eq\  using $\cO(\sqrt{n})$ bits.
\end{proposition}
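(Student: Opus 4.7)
The plan is to combine a good error-correcting code with a birthday-paradox-style sampling argument. Fix an asymptotically-good binary code $C:\{0,1\}^n\to\{0,1\}^m$ with $m=O(n)$ and constant relative distance $\delta>0$ (for instance, Justesen's code); the crucial property is that $x\neq y$ implies $C(x)$ and $C(y)$ differ on a set $D\subseteq[m]$ of size at least $\delta m$.

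In the protocol, Alice encodes $x$ as $C(x)$ and Bob encodes $y$ as $C(y)$. Each independently samples $k=c\sqrt{m}$ coordinates (for a suitable constant $c$) by drawing an $O(\log n)$-bit seed for a pairwise-independent hash family, and sends to the referee the seed together with the bits of its codeword at the sampled coordinates. The total bandwidth per player is $O(\log n)+O(\sqrt{n})=O(\sqrt{n})$. The referee reconstructs both sample sets from the seeds and accepts iff, on every coordinate that both players sampled, the two transmitted bits agree.

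The steps I would verify are the following. \emph{Completeness} is immediate, since $x=y$ forces $C(x)=C(y)$, so every coordinate in the intersection is consistent and the referee always accepts. For \emph{soundness}, let $X$ count the positions of $D$ that lie in the intersection of Alice's and Bob's samples. A direct calculation gives $\mathbf{E}[X]=\Theta(|D|\,k^2/m^2)=\Theta(1)$ for the right choice of $c$, and a second-moment (Paley--Zygmund) argument applied to the pairwise-independent samples yields $\mathbf{Pr}[X\geq 1]\geq 2/3$, so the referee detects a discrepancy with constant probability whenever $x\neq y$. Standard parallel amplification by a constant number of independent trials then drives the error below the threshold required by the definition while preserving the $O(\sqrt{n})$ bandwidth.

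The main obstacle I anticipate is keeping the total bandwidth at $O(\sqrt{n})$ rather than the naive $O(\sqrt{n}\log n)$: just listing $\Theta(\sqrt{n})$ coordinate indices explicitly already costs $\Theta(\sqrt{n}\log n)$ bits. This forces two design choices that are the technical heart of the Babai--Kimmel construction: the sample must be specified compactly by an $O(\log n)$-bit seed through a pairwise-independent generator, and the soundness analysis must go through assuming only pairwise independence of the sampled coordinates, which is precisely what the second-moment method delivers.
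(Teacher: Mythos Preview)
The paper does not actually prove this proposition: it is quoted verbatim from \cite{babai1997randomized} and used as a black box, so there is no ``paper's own proof'' to compare against. Your sketch is the standard construction and is correct in outline: encode with an asymptotically good binary code, have each player sample $\Theta(\sqrt{m})$ positions described by a short seed, and let the referee compare bits on the common positions; the birthday/second-moment argument then gives constant soundness.

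Two small points are worth tightening. First, your sentence ``a second-moment (Paley--Zygmund) argument \ldots\ yields $\mathbf{Pr}[X\geq 1]\geq 2/3$'' overstates what Paley--Zygmund delivers: with pairwise independence the cross terms where exactly one of the two sample indices coincides contribute $\Theta(c^2)\,\mathbf{E}[X]$ to the variance, so the resulting bound is only $\mathbf{Pr}[X>0]\geq \delta/(\Theta(1)+\delta)$, a constant depending on the code's relative distance $\delta$ but not one you can push to $2/3$ by enlarging $c$. Since you already invoke constant-round parallel amplification afterwards, this is harmless, but the intermediate claim should be weakened to ``$\mathbf{Pr}[X\geq 1]$ is bounded below by a positive constant''. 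Second, be explicit about how the $O(\log n)$-bit seed yields exactly $k$ positions (e.g., $i_j=a+bj$ over a prime field of size $\Theta(m)$), so that the transmitted string has deterministic length $O(\sqrt{n})$; if one instead selects $\{i:h(i)=0\}$ for a random hash, the set size fluctuates and a truncation rule is needed.
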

By using the protocol of Babai and Kimel,  we can directly construct a $\dMA^{\priv}$ protocol for $\twocoleq$.

\begin{lemma} 
	 $\twocoleq \in \dMA^{\priv}[\sqrt{n}]$.
\end{lemma}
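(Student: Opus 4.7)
The plan is to combine Merlin's classical role of certifying a $2$-coloring with each node locally simulating the private-coin Babai--Kimmel $\eq$ protocol (Proposition~\ref{EQsym}) on its $n$-bit label. Specifically, Merlin assigns to each node $v$ a claimed color bit $c_v \in \{0,1\}$. In the randomized verification round, each node $v$ independently draws a private random string $r_v$ and broadcasts the pair $(c_v, M_v)$, where $M_v$ is the $\cO(\sqrt{n})$-bit $\mathsf{SM}$-message produced by the Babai--Kimmel $\eq$ protocol on input $I(v)$ with randomness $r_v$. The total bandwidth is then $\cO(\sqrt{n})$, as required.

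Each node $v$ accepts iff (i)~every neighbor $u$ satisfies $c_u \neq c_v$, and (ii)~for every unordered pair $u,w$ of neighbors of $v$, node $v$ locally executes the Babai--Kimmel referee on $(M_u, M_w)$ and the referee outputs ``$I(u) = I(w)$''. Completeness is immediate: if $(G,\id,I) \in \twocoleq$, then an honest Merlin sends the (essentially unique) proper $2$-coloring, so~(i) holds at every node; moreover any two neighbors of $v$ lie in the same side of the bipartition and therefore share the same $n$-bit label, so~(ii) holds with probability~$1$ by the one-sided completeness of Babai--Kimmel.

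For soundness, suppose $(G,\id,I) \notin \twocoleq$. If $G$ is not bipartite, or Merlin's coloring is not a proper $2$-coloring, then~(i) already fails at some node with probability~$1$. Otherwise $c$ induces a proper bipartition $(V_0,V_1)$ but some part contains two vertices with distinct labels; since $G$ is connected, there is a walk inside one side, say $w_0, w_2, \dots, w_{2k}$, with $I(w_0) \neq I(w_{2k})$, hence some consecutive pair $w_{2i}, w_{2i+2}$ satisfies $I(w_{2i}) \neq I(w_{2i+2})$ and shares a common neighbor $w_{2i+1}$. When $w_{2i+1}$ runs check~(ii) on this specific pair, the Babai--Kimmel soundness guarantees rejection with probability at least~$2/3$. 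Only this single fixed ``bad'' pair must trigger rejection, so no union bound over the (potentially many) neighbor pairs is needed.

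The main obstacle is not the protocol itself, which is a direct distributed adaptation of Babai--Kimmel, but rather the short structural observation above: any monochromatic violation in a connected bipartite graph is always witnessed by two same-side vertices sharing a common neighbor. Once this is in place, both the bandwidth bound and the $\tfrac{1}{3}$-error guarantee follow from a single black-box invocation of Proposition~\ref{EQsym}.
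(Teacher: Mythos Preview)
Your proof is correct and follows essentially the same approach as the paper: Merlin certifies a $2$-coloring, each node broadcasts its Babai--Kimmel $\mathsf{SM}$-message, and every node acts as referee for all pairs of its neighbors. Your soundness argument is in fact slightly more explicit than the paper's, spelling out the walk argument that produces two same-side vertices with distinct labels sharing a common neighbor, whereas the paper simply asserts such a pair can be chosen by connectivity.
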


\begin{proof}
	The prover sends each node $v$ the bit $c_v$ that defines the $2$-coloring and then the nodes proceed to broadcast a message according to the protocol in \cref{EQsym}. 
	Then, they locally verify the consistency of the $2$-coloring and each node $w$ in $V_i$ proceeds to act as referee for each pair of nodes $u,v$ in its vicinity, accepting if for each pair of nodes the referee would accept.
	\begin{itemize}
		\item \completeness\ If the input corresponds to a \emph{yes}-instance, then the nodes always accept: they receive and verify the  $2$-coloring and for all pairs of neighbors check that the equality protocol holds, as the protocol from \cref{EQsym} has one sided error, the graph accepts with perfect probability.
		
		\item\soundness\ Suppose that the input corresponds to a \emph{no}-instance. If $G$ is not bipartite the nodes immediately reject because in that case the coloring $c$ is not consistent (i.e. $c_v = c_u$ for two adjacent vertices $u$, $v$). Suppose now that $G$ is bipartite with partitions $V_0$ and $V_1$, and suppose without loss of generality that there are two vertices $u, v$ in $V_0$ such that $I_u \neq I_v$. Observe that, since $G$ is connected, we can choose $u,v$ with a common neighbor $w \in V_1$. Then,  the probability that $w$ accepts is at most $\eps$, where $\eps$ is the acceptation error of the protocol described in \cref{EQsym}.
		\end{itemize}\end{proof}

\subsection{The lower bound}

In order to give a lower-bound on the bandwidth of any $\dMA^{\priv}$ protocol solving $\twocoleq$, we show that the result of Babai and Kimmel given by \cref{babai-kimmel} can be extended to the scenario where Alice and Bob have access to random bits.

\begin{theorem}\label{lowerMAsym}

Let $f: X \times Y \to \{0,1\}$ be any boolean function. Let $0<\eps<\frac{1}{2}$.  Any $\eps$-error $\MA^{\sym}$ protocol for  solving 
	$f$ using private coins needs the messages to be of size
	at least $\Omega\left(\sqrt{\mathsf{M}^{\sym}(f)}\right)$.
\end{theorem}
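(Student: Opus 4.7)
The strategy is to lift the argument of Proposition~\ref{babai-kimmel} to the setting with a prover by establishing that $\M^{\sym}(f) = \cO(c^2)$ whenever $f$ admits an $\eps$-error $\MA^{\sym}$ protocol with private coins and Alice/Bob messages of size $c$. The stated inequality $c = \Omega(\sqrt{\M^{\sym}(f)})$ then follows by rearrangement.

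Fix such a protocol $\Pi$. For each candidate proof $m$, hard-coding $m$ inside $\Pi$ yields a private-coin randomized $\mathsf{SM}$ protocol (without prover) for the boolean function $g_m$ defined by $g_m(x,y)=1$ iff the referee of $\Pi$ accepts $(x,y)$ on proof $m$ with probability at least $1/2$. This protocol has messages of size $c$ and error at most $\eps$, so applying Proposition~\ref{babai-kimmel} to each $g_m$ produces a deterministic $\mathsf{SM}$ protocol for $g_m$ whose messages have size $\cO(c^2)$; write $(\alpha^m, \beta^m, \rho^m)$ for the resulting strategies of Alice, Bob, and the referee.

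Combine these into an $\M^{\sym}$ protocol for $f$: Merlin sends $m$ to both players, Alice outputs $\alpha^m(x)$, Bob outputs $\beta^m(y)$, and the referee decides via $\rho^m$. Completeness follows by having Merlin send the good proof $m^{*}$ promised by $\Pi$, so that $g_{m^{*}}(x,y)=1$ and the deterministic protocol accepts. Soundness follows from the soundness of $\Pi$: when $f(x,y)=0$ we have $g_m(x,y)=0$ for every $m$, so the deterministic protocol rejects no matter what Merlin claims.

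The main obstacle is that the proof $m$ could be arbitrarily long, which would inflate the total cost of this $\M^{\sym}$ protocol past $\cO(c^2)$. To overcome this, one must argue that $m$ can be replaced by an $\cO(c^2)$-bit summary. The key observation is that the only relevant information in $m$ is the pair of distributions $(\mu_{x,m}, \nu_{y,m})$ over $\{0,1\}^c$ that it induces at Alice and Bob; these lie in the simplex $\Delta^{2^c-1}$ and enter the analysis via a rank-$2^c$ bilinear form. Using the discrepancy/rank machinery behind Proposition~\ref{babai-kimmel}, Merlin can replace $m$ by a discretized index fitting in $\cO(c^2)$ bits, changing the acceptance probability by only $o(1)$ (which is absorbed by standard amplification of the original protocol). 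This delicate quantitative step is where the argument must go beyond a black-box use of Proposition~\ref{babai-kimmel}.
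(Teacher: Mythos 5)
Your high-level plan (fix the proof $m$, derandomize the resulting private-coin protocol, and wrap a nondeterministic protocol around it) is in the right spirit, but two steps fail as written. The most serious one is the step you yourself flag as the ``main obstacle,'' and your proposed fix cannot work. In the paper's cost model the cost of an $\MA^{\sym}$ protocol is the proof size \emph{plus} the maximum message size, and the theorem is proved with a single bound $K$ on both; consequently the derandomized nondeterministic protocol may simply reuse $m$ verbatim (at most $K$ extra bits), and no compression of $m$ is needed at all. If instead you bound only Alice's and Bob's messages by $c$ and allow an unboundedly long proof, the statement you are trying to prove is false: for $\eq$, Merlin sends a string $z$ claimed to equal both inputs, Alice sends the single bit indicating whether $z=x$, Bob the bit indicating whether $z=y$, and the referee accepts iff both bits equal $1$; this is a zero-error $\MA^{\sym}$ protocol with $1$-bit messages although $\M^{\sym}(\eq)=\Theta(n)$. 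The same example shows why your compression idea breaks soundness: the relevant information is \emph{not} just the induced pair $(\mu_{x,m},\nu_{y,m})$, because soundness of $\Pi$ is guaranteed only for pairs arising from one common proof $m$; a cheating Merlin in the compressed protocol can announce a pair that arises from no single $m$ (for $x\neq y$, make Alice behave as if $m=x$ and Bob as if $m=y$), and the referee then accepts. So the ``delicate quantitative step'' is not delicate but impossible in general, and it is also unnecessary once the cost of the proof is accounted for as in the paper.

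There is a second, independent gap: the black-box use of Proposition~\ref{babai-kimmel} on $g_m$ is not valid. When $f(x,y)=1$ but $m$ is not a good proof, the acceptance probability of $\Pi$ with $m$ hard-coded can be arbitrarily close to (or exactly) $\frac{1}{2}$; then the hard-coded protocol does not compute the \emph{total} function $g_m$ with error $\eps$, nor with any error below $\frac12$, so the proposition yields no bound on $\mathsf{D}(g_m)$. What you need is a promise version of Babai--Kimmel (distinguish acceptance probability $\geq 1-\eps$ from $\leq\eps$), and proving that requires opening up their sampling argument rather than citing the statement. This is exactly what the paper does: for each input and each proof $m$ it constructs, via a Chernoff bound, a union bound and the probabilistic method, multisets $T_{x,m}$ and $T_{y,m}$ of $\cO(K)$ messages from which the referee can estimate the acceptance probability $F(x,y,m)$ within $\pm 2\delta$; the resulting $\M^{\sym}$ protocol sends $m$ together with these multisets, for total cost $\cO(K^2)$, which gives $K=\Omega\bigl(\sqrt{\M^{\sym}(f)}\bigr)$. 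If you repair your argument along these lines, it converges to the paper's proof rather than giving a genuinely black-box alternative.
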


\begin{proof}
	Let $f: X \times Y  \to \{0,1\}$ be a boolean function and consider $\mathcal{P}$ to be a $ \MA^{\sym}$ protocol with two sided error $\eps$, where the size of the messages sent by Alice and Bob and the size of the proof are bounded by $K$.
	
	Let $\Gamma$ be the set of all possible proofs sent by Merlin.
	Let $\Omega$ and $\Phi$ to be the set of all possible messages sent by Alice and Bob, with $a$ and $b$ bits. Now, given input $x$ and proof $m$, we define $\mu_{x,m}$ to be distribution of messages sent by Alice given her input and the proof received. We define $\nu_{y,m}$ analogously for Bob.
	
	Now, set $T= \{w_i\}_{i=1}^t$ to be a multiset of elements in $\Omega$  obtained uniformly at random, with \(\mu(T) = \sum_{i=1}^t \mu(w_i)\)
	and define $\rho(\omega, \varphi)$ as be the indicator function of whether the referee accepts given messages $\omega$ and $\varphi$. 
	
	We then have that $\mathcal{P}$'s correctness can be restated as follows:
	\begin{align*}  &f(x,y)=1 \longrightarrow \exists m,\quad   \underset{\omega, \varphi}{\sum} \mu_{x,m} (\omega) \cdot \nu_{y,m}(\varphi) \rho ( \omega, \varphi) \geq 1-\eps \\
	&f(x,y) = 0 \longrightarrow \forall m, \quad \underset{\omega, \varphi}{\sum}\mu_{x,m} (\omega) \cdot \nu_{y,m}(\varphi) \rho ( \omega, \varphi) \leq \eps \end{align*}
	Finally, consider the  \textit{strength} of $\varphi$ over $x$, given the proof $m$ to be defined as
	\[ F( x, \varphi, m ) =\sum_{\omega \in \Omega } \mu_{x,m}(\omega)\cdot \rho(\omega, \varphi) \]
	And, given an input $x\in X$, a proof $m$, $\varphi$ and a multiset $T= \{w_i\}_{i=1}^t$ we set the variables 
	\[ \xi_i(\varphi, m ) = \begin{cases} 1 &\text{ if the referee accepts } (\omega_i, \varphi) \text{ given }m\\
	0 &\text{ if not}
	\end{cases}\]
	\begin{remark}\label{expectedXi}\label{chernoff}
		The variables $\xi_i(\varphi, m)$ are independent and their expected value is $F(x, \varphi, m)$
	\end{remark}

	\begin{claim}\label{multisets}
		For all input $x$  and proof $m$ there exists a multiset $T_{x,m} = \{w_1, \dots w_t\}$ with $t = O( \log (|\Phi| ))$ such that for any $\varphi \in \Phi$.
		\[ \biggl |\sum_{i=1}^{t}\xi_i(\varphi, m )- t \cdot F(x,\varphi, m)\biggr| \leq \delta\cdot t  \]
	\end{claim}
	\begin{claimproof}
		Indeed, if we define $\Lambda( \varphi, m, T)$ to be the event $\left\{ |\sum_{i=1}^{t}\xi_i(\varphi, m )- t \cdot F(x,\varphi, m)| > \delta\cdot t \right\}$ choosing $T$ uniformly at random. Following \cref{chernoff} we obtain by a Chernoff bound that:
		\[ \mathbf{Pr}_{ T} \left(\Lambda(\varphi, m, T_x)\right) < 2 \cdot e^{-(\delta\cdot t)^2 /2t} = 2e^{-t\delta^2/2} < \frac{1}{2|\Phi|}\]
		By taking a large enough constant for $t$. Then
		\[ 	\mathbf{Pr}_{ T}(  \exists \varphi \text{ s.t. } \Lambda(\varphi, m, T_x) ) < 1/2	\]
		And so, by the probabilistic method, there exists a $T_{x,m}$ such that for any $\varphi$ we have
		\[  	\biggl|\sum_{i=1}^{t}\rho_m(\omega_i, \varphi )- t \cdot F(x,\varphi, m)\biggr| \leq \delta\cdot t 	\]
	\end{claimproof}
	We construct a collection $\{T_{y,m}\}_{(y,m) \in Y\times\Gamma}$ for each input $y$ for Bob and each proof $m$ in a similar way .
	\begin{claim}
		For any $x,y$ and proof $m$ the pair $(T_{x,m},T_{y,m})$ induces a non deterministic protocol for $f$.
	\end{claim}
\begin{claimproof}
	We may first assume, without loss of generality, that the referee's decision is deterministic: for any tuple $(\omega, \varphi, m)$ the referee outputs the most probable answer over his random bits, duplicating the error \cite{newman1996public}.  And so we may consider $\rho_m(\omega, \varphi)$ to be the indicator function over the referee's decision given messages $(\omega, \varphi)$  and the proof $m$.
	
	For $x \in X$, consider $T_{m,x} = (\omega_1, \dots, \omega_t)$ and for $y \in Y$, $T_{m,y} = (\varphi_1, \dots, \varphi_t)$ the collection of messages obtained by \cref{chernoff}.
	
Also, we consider the acceptance probability of the pair $x,y$ given a proof $m$ as
	\[F(x,y,m) = \underset{\omega, \varphi}{\sum}\mu_{x,m} (\omega)  \ \nu_{y,m}(\varphi) \cdot\rho_m ( \omega, \varphi)\]
	By the definition of $T_{x,m}$ and $T_{y,m}$ we have that:
	\[ \biggl |\sum_{i=1}^{t}\rho_m(\omega_i, \varphi)- t \cdot F(x,\varphi, m)\biggr | \leq \delta\cdot t  \text{\: and\: } \biggl |\sum_{j=1}^{t}\rho_m(\omega, \varphi_j)- t \cdot F(\omega, y, m)\biggr | \leq \delta\cdot t\]
	with the strength of $\varphi$ and $\omega$ with respect to $x$ and $y$ being defined in the same way as before. And so

	\begin{align*} &\underset{\omega}{\sum}\mu_{x,m} (\omega) \rho_m ( \omega, \varphi) \leq \frac{1}{t} \sum_{i=1}^{t}\rho_m(\omega_i, \varphi) + \delta \\
	&\underset{\varphi}{\sum}\nu_{y,m} (\varphi) \rho_m ( \omega, \varphi) \leq \frac{1}{t}\sum_{j=1}^{t}\rho_m(\omega, \varphi_j)+ \delta
	\end{align*}
	
this allows to bound the acceptance probability of $x$ and $y$ as:
	\begin{align*}
	F(x,y,m)  
	&\leq   \underset{\omega}{\sum}\mu_{x,m} (\omega) \biggl( \frac{1}{t}\sum_{j=1}^{t}\rho_m(\omega, \varphi_j)+ \delta  \biggr)  \\
	&\leq  \delta +  \biggl(\underset{\omega}{\sum}\mu_{x,m} (\omega)  \frac{1}{t}\sum_{j=1}^{t}\rho_m(\omega, \varphi_j) \biggr) \\
	&  \leq   \delta + \frac{1}{t}\sum_{j=1}^{t} \biggl(\underset{\omega}{\sum}\mu_{x,m} (\omega) \rho_m(\omega, \varphi_j) \biggr)   \\
	&\leq   2\delta + \frac{1}{t^2 }\sum_{i,j=1}^{t} \rho_m(\omega_i, \varphi_j)  
	\end{align*}
by replicating the above procedure for the other direction we obtain:
	\[ \biggl|\sum_{i,j=1}^{t}\rho_m(\omega_i, \varphi_j)- t^2 \cdot F(x,y, m)\biggr| \leq 2\delta t^2 \]
	In other words, if the referee receives $T_{x,m}$ and $T_{y,m}$ he may estimate the value of $F(x,y,m)$ by a factor of $2\delta$ and accept or reject accordingly.
	
	From here we can define the protocol $\mathcal{P}^*$ simply as follows: Alice and Bob send $T_{x,m}$ and $T_{y,m}$ respectively. Then the referee takes de average answer for all pairs $(\omega_i, \varphi_j)$ given $m$ and accepts if the majority of the cases accept.
	\begin{itemize}
		
		\item \completeness: If $(x,y)$ is a yes-instance, then there exists a proof $m$ such that $F(x,y,m)$ is large ($\geq1-\eps$) . As we know that
		$ \frac{1}{t^2}|\sum_{i,j=1}^{t}\rho_m(\omega_i, \varphi_j)- t^2 \cdot F(x,y, m)| \leq 2\delta $ we have that $ \frac{1}{t^2}\sum_{i,j=1}^{t}\rho_m(\omega_i, \varphi_j) \geq 1-\eps-2\delta$. Therefore by choosing $\delta$ sufficiently small the referee accepts $T_{x,m}$ and $T_{y,m}$.
		
		\item \soundness: If $(x,y)$ is a no-instance, then for any proof $m$ if follows that $F(x,y,m)$ is small ($\leq \eps$). And as we know that
		$ \frac{1}{t^2}|\sum_{i,j=1}^{t}\rho_m(\omega_i, \varphi_j)- t^2 \cdot F(x,y, m)| \leq 2\delta $ then we have that $ \frac{1}{t^2}\sum_{i,j=1}^{t}\rho_m(\omega_i, \varphi_j) \leq \eps+ 2\delta$. And so the referee rejects $T_{x,m}$ and $T_{y,m}$ for any $m$.
	\end{itemize}\end{claimproof}
Thus given an $\MA^{\sym}$ protocol for $f$ using $O(K)$ bits we obtained a $\M^{\sym}$ protocol that uses $O(K^2)$ bits. Therefore $K = \Omega\left( \sqrt{\textsf{M}^{\sym}(f)}\right)$.
\end{proof}

\begin{lemma}\label{2colEQ}
	If $\twocoleq \in \dMA_{\eps}^{\priv}[f(n)]$ with $\eps <1/4$, then there exists a protocol $\mathcal{P}$ solving $\textsc{Equality}$ in the $\mathsf{MA}^{sym}$ model with bandwidth $\cO(f(n))$.
\end{lemma}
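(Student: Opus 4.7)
The plan is to prove the lemma by instantiating the hypothesized $\dMA^{\priv}$-protocol $\mathcal{P}$ on the two-vertex edge graph with the $\eq$-inputs as the two endpoint labels, and then repackaging the resulting communication pattern as an $\MA^{\sym}$-protocol for $\eq$.

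First I would fix the reduction gadget. Given inputs $x, y \in \{0,1\}^n$, consider the instance $(G, \id, I)$ of $\twocoleq$ where $G$ is the single edge $\{u, v\}$ and $I(u) = x$, $I(v) = y$. Since adjacency across the only edge forces a proper two-coloring to use distinct labels, $(G, \id, I) \in \twocoleq$ if and only if $x \ne y$; equivalently, $\neg\eq(x, y)$. Thus any protocol deciding $\twocoleq$ on this family of instances yields, after negating the acceptance, a protocol deciding $\eq$ on $n$-bit inputs.

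Next I would simulate $\mathcal{P}$ in $\MA^{\sym}$. Alice plays the role of $u$ and Bob plays $v$; the shared proof consists of the certificates $c_u, c_v$ that $\mathcal{P}$'s honest prover would deliver, together with committed broadcast messages $\hat m_u, \hat m_v$. Using her private coins, Alice computes the $f(n)$-bit message $m_u$ she would actually broadcast in $\mathcal{P}$ and her local $\mathcal{P}$-decision $d_u$ evaluated as if $v$'s broadcast were $\hat m_v$; she forwards $(m_u, d_u)$ to the referee. Bob does the symmetric computation and forwards $(m_v, d_v)$. The referee accepts iff $m_u = \hat m_u$, $m_v = \hat m_v$, and $d_u = d_v = \text{accept}$; the final $\MA^{\sym}$-output for $\eq$ is the negation of the referee's decision. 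Each player transmits $f(n) + \cO(1)$ bits, and the proof is $\cO(f(n))$ bits, giving bandwidth $\cO(f(n))$.

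For completeness, an instance with $x \ne y$ is in $\twocoleq$, so $\mathcal{P}$'s honest prover exhibits certificates and a compatible transcript under which the run accepts; packaging those certificates and the transcript's broadcast messages as the $\MA^{\sym}$-proof, the referee accepts with probability at least $1 - \eps$. For soundness, an instance with $x = y$ is not in $\twocoleq$, so $\mathcal{P}$'s soundness bounds by $\eps$ the joint probability (over the players' private coins) that both broadcast messages agree with the Merlin-committed predictions and both local $\mathcal{P}$-decisions are \emph{accept}. The slack in the hypothesis $\eps < 1/4$ is then used to absorb the negation of the output and a small constant-factor loss in the analysis, yielding an $\MA^{\sym}$-protocol for $\eq$ with overall error strictly below $1/2$. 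The hardest step I anticipate is the private-coin accounting: Alice's and Bob's outgoing messages depend on their private randomness, so Merlin's committed $\hat m_u, \hat m_v$ cannot literally match the broadcasts for an arbitrary completeness-witnessing prover; the natural way out is to let the proof additionally specify the random seeds the players should adopt, which is legitimate in the private-coin model since the players' outputs may depend on the proof, and to note that under the soundness regime each such seed choice is merely one realization of the private-coin distribution already covered by $\mathcal{P}$'s $\eps$-bound.
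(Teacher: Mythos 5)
Your reduction gadget points in the wrong direction, and the negation you use to fix the orientation is not available in a Merlin--Arthur setting. You map $x=y$ to a \emph{no}-instance of $\twocoleq$ and then declare the $\eq$-referee's output to be the negation of the simulated decision; but MA acceptance is not closed under this operation, because the quantifier over Merlin does not flip. Completeness of $\mathcal{P}$ only says that \emph{some} prover makes all nodes accept a yes-instance; it says nothing about arbitrary proofs. So when $x\neq y$ (a yes-instance of $\twocoleq$, where your $\eq$-protocol must reject every proof), a cheating Merlin can hand Alice and Bob garbage certificates, the simulated run of $\mathcal{P}$ then rejects, your negated referee accepts, and soundness of the $\eq$-protocol fails. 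The paper avoids this by choosing a gadget with the opposite orientation: a path whose odd nodes are labeled $0^n$, with $x$ on the even nodes of Alice's half and $y$ on Bob's half, so that $x=y$ corresponds to a \emph{yes}-instance and no complementation is ever needed.

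The second gap is the step you yourself flag: letting Merlin commit to the broadcast messages $\hat m_u,\hat m_v$, or equivalently to the players' seeds. In an MA protocol the coins are tossed \emph{after} Merlin speaks, and the soundness parameter $\eps$ bounds the acceptance probability averaged over uniformly random coins, not the worst case over coins; there may well exist rare coin outcomes on which all nodes accept a no-instance, and a cheating Merlin can name exactly such a seed, so the seed choice is \emph{not} ``merely one realization already covered by the $\eps$-bound.'' Worse, once Merlin dictates the seeds the players use no randomness at all, so what you would really be building is an $\M^{\sym}$ protocol for $\eq$ with $\cO(f(n))$ bits, which the $\Theta(n)$ nondeterministic lower bound for $\eq$ rules out precisely in the regime $f(n)=o(n)$ where the lemma is applied. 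The paper's construction never commits to random messages: the referee owns the middle node $v_{n+1}$ of the path, Alice and Bob send it the \emph{actual} randomized boundary broadcasts $s_n$ and $s_{n+2}$, the certificate of $v_{n+1}$ (for a consistency check), and one bit each certifying that all of their internal nodes accept for more than a $1-\eps$ fraction of their own coins (computable locally by enumeration); the referee then simulates only $v_{n+1}$ and the error gap is restored by standard amplification. Any repair of your construction along these lines essentially reproduces that argument.
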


\begin{proof}
	Indeed, let $\mathcal{P}$ be a protocol for $\twocoleq$ in the model $\dMA$\ using random coins. We design a protocol $\mathcal{P}^*$ in the $\MA^{\sym}$ defined as follows. Let $x,y \in \{0,1\}^n$, and assume without loss of generality that $n$ is even.  Given $n \in \mathbb{N}$ Alice, Bob and the referee  construct the following network configuration $(G, \id, I)$:
	
	\begin{itemize}
		\item $G$ is a path of $2n+1$ nodes $v_1, \dots, v_{2n+1}$.
		\item $\id(v_i) = i$ for each $i\in \{1, \dots, 2n+1\}$.
		\item 
		$I(v_i) = \left\{ \begin{array}{cl} 
		0^n & \textrm{if } i \textrm{ is odd } \\ 
		x & \textrm{if } i \textrm{ is even and } i \leq n\\
		y & \textrm{if } i \textrm{ is even and } i > n\\
		\end{array}\right.$

	\end{itemize} 

	\begin{figure}[!ht]
		\centering

\tikzset{every picture/.style={line width=0.75pt}} 
		
		\begin{tikzpicture}[x=0.75pt,y=0.75pt,yscale=-1,xscale=1]
		
		\draw    (240,140) -- (280,140) ;
		\draw    (120,140) -- (160,140) ;
		\draw    (160,140) -- (200,140) ;
		\draw    (200,140) -- (240,140) ;
		\draw  [fill={rgb, 255:red, 255; green, 255; blue, 255 }  ,fill opacity=1 ] (105,140) .. controls (105,131.72) and (111.72,125) .. (120,125) .. controls (128.28,125) and (135,131.72) .. (135,140) .. controls (135,148.28) and (128.28,155) .. (120,155) .. controls (111.72,155) and (105,148.28) .. (105,140) -- cycle ;
		\draw  [fill={rgb, 255:red, 255; green, 255; blue, 255 }  ,fill opacity=1 ] (185,140) .. controls (185,131.72) and (191.72,125) .. (200,125) .. controls (208.28,125) and (215,131.72) .. (215,140) .. controls (215,148.28) and (208.28,155) .. (200,155) .. controls (191.72,155) and (185,148.28) .. (185,140) -- cycle ;
		\draw  [fill={rgb, 255:red, 255; green, 255; blue, 255 }  ,fill opacity=1 ] (225,140) .. controls (225,131.72) and (231.72,125) .. (240,125) .. controls (248.28,125) and (255,131.72) .. (255,140) .. controls (255,148.28) and (248.28,155) .. (240,155) .. controls (231.72,155) and (225,148.28) .. (225,140) -- cycle ;
		\draw  [fill={rgb, 255:red, 255; green, 255; blue, 255 }  ,fill opacity=1 ] (145,140) .. controls (145,131.72) and (151.72,125) .. (160,125) .. controls (168.28,125) and (175,131.72) .. (175,140) .. controls (175,148.28) and (168.28,155) .. (160,155) .. controls (151.72,155) and (145,148.28) .. (145,140) -- cycle ;
		\draw    (440,140) -- (480,140) ;
		\draw    (280,140) -- (340,140) ;
		\draw    (400,140) -- (440,140) ;
		\draw  [fill={rgb, 255:red, 255; green, 255; blue, 255 }  ,fill opacity=1 ] (265,140) .. controls (265,131.72) and (271.72,125) .. (280,125) .. controls (288.28,125) and (295,131.72) .. (295,140) .. controls (295,148.28) and (288.28,155) .. (280,155) .. controls (271.72,155) and (265,148.28) .. (265,140) -- cycle ;
		\draw    (340,140) -- (400,140) ;
		\draw  [fill={rgb, 255:red, 255; green, 255; blue, 255 }  ,fill opacity=1 ] (325,140) .. controls (325,131.72) and (331.72,125) .. (340,125) .. controls (348.28,125) and (355,131.72) .. (355,140) .. controls (355,148.28) and (348.28,155) .. (340,155) .. controls (331.72,155) and (325,148.28) .. (325,140) -- cycle ;
		\draw  [fill={rgb, 255:red, 255; green, 255; blue, 255 }  ,fill opacity=1 ] (425,140) .. controls (425,131.72) and (431.72,125) .. (440,125) .. controls (448.28,125) and (455,131.72) .. (455,140) .. controls (455,148.28) and (448.28,155) .. (440,155) .. controls (431.72,155) and (425,148.28) .. (425,140) -- cycle ;
		\draw    (480,140) -- (520,140) ;
		\draw    (520,140) -- (560,140) ;
		\draw  [fill={rgb, 255:red, 255; green, 255; blue, 255 }  ,fill opacity=1 ] (385,140) .. controls (385,131.72) and (391.72,125) .. (400,125) .. controls (408.28,125) and (415,131.72) .. (415,140) .. controls (415,148.28) and (408.28,155) .. (400,155) .. controls (391.72,155) and (385,148.28) .. (385,140) -- cycle ;
		\draw  [fill={rgb, 255:red, 255; green, 255; blue, 255 }  ,fill opacity=1 ] (505,140) .. controls (505,131.72) and (511.72,125) .. (520,125) .. controls (528.28,125) and (535,131.72) .. (535,140) .. controls (535,148.28) and (528.28,155) .. (520,155) .. controls (511.72,155) and (505,148.28) .. (505,140) -- cycle ;
		\draw  [fill={rgb, 255:red, 255; green, 255; blue, 255 }  ,fill opacity=1 ] (545,140) .. controls (545,131.72) and (551.72,125) .. (560,125) .. controls (568.28,125) and (575,131.72) .. (575,140) .. controls (575,148.28) and (568.28,155) .. (560,155) .. controls (551.72,155) and (545,148.28) .. (545,140) -- cycle ;
		\draw  [fill={rgb, 255:red, 255; green, 255; blue, 255 }  ,fill opacity=1 ] (465,140) .. controls (465,131.72) and (471.72,125) .. (480,125) .. controls (488.28,125) and (495,131.72) .. (495,140) .. controls (495,148.28) and (488.28,155) .. (480,155) .. controls (471.72,155) and (465,148.28) .. (465,140) -- cycle ;
		\draw  [color={rgb, 255:red, 74; green, 144; blue, 226 }  ,draw opacity=1 ][dash pattern={on 4.5pt off 4.5pt}] (100,118) .. controls (100,108.06) and (108.06,100) .. (118,100) -- (282,100) .. controls (291.94,100) and (300,108.06) .. (300,118) -- (300,172) .. controls (300,181.94) and (291.94,190) .. (282,190) -- (118,190) .. controls (108.06,190) and (100,181.94) .. (100,172) -- cycle ;
		\draw  [color={rgb, 255:red, 208; green, 2; blue, 27 }  ,draw opacity=1 ][dash pattern={on 4.5pt off 4.5pt}] (380,118) .. controls (380,108.06) and (388.06,100) .. (398,100) -- (562,100) .. controls (571.94,100) and (580,108.06) .. (580,118) -- (580,172) .. controls (580,181.94) and (571.94,190) .. (562,190) -- (398,190) .. controls (388.06,190) and (380,181.94) .. (380,172) -- cycle ;
		\draw  [color={rgb, 255:red, 245; green, 166; blue, 35 }  ,draw opacity=1 ][dash pattern={on 4.5pt off 4.5pt}] (310,112) .. controls (310,105.37) and (315.37,100) .. (322,100) -- (358,100) .. controls (364.63,100) and (370,105.37) .. (370,112) -- (370,178) .. controls (370,184.63) and (364.63,190) .. (358,190) -- (322,190) .. controls (315.37,190) and (310,184.63) .. (310,178) -- cycle ;
		
		\draw (113,138) node [anchor=north west][inner sep=0.75pt]  [font=\footnotesize]  {$v_{1}$};
		\draw (153,138) node [anchor=north west][inner sep=0.75pt]  [font=\footnotesize]  {$v_{2}$};
		\draw (190,142) node [anchor=north west][inner sep=0.75pt]    {$\dotsc $};
		\draw (226,138) node [anchor=north west][inner sep=0.75pt]  [font=\footnotesize]  {$v_{n-2}$};
		\draw (266,138) node [anchor=north west][inner sep=0.75pt]  [font=\footnotesize]  {$v_{n-1}$};
		\draw (333,138) node [anchor=north west][inner sep=0.75pt]  [font=\footnotesize]  {$v_{n}$};
		\draw (386,138) node [anchor=north west][inner sep=0.75pt]  [font=\footnotesize]  {$v_{n+1}$};
		\draw (425,138) node [anchor=north west][inner sep=0.75pt]  [font=\footnotesize]  {$v_{n+2}$};
		\draw (470,142) node [anchor=north west][inner sep=0.75pt]  [font=\normalsize]  {$\dotsc $};
		\draw (504,138) node [anchor=north west][inner sep=0.75pt]  [font=\footnotesize]  {$v_{2n-1}$};
		\draw (550,138) node [anchor=north west][inner sep=0.75pt]  [font=\footnotesize]  {$v_{2n}$};
		\draw (111,158) node [anchor=north west][inner sep=0.75pt]    {$0^{n}$};
		\draw (231,158) node [anchor=north west][inner sep=0.75pt]    {$0^{n}$};
		\draw (431,158) node [anchor=north west][inner sep=0.75pt]    {$0^{n}$};
		\draw (511,158) node [anchor=north west][inner sep=0.75pt]    {$0^{n}$};
		\draw (331,158) node [anchor=north west][inner sep=0.75pt]    {$0^{n}$};
		\draw (154,158) node [anchor=north west][inner sep=0.75pt]    {$x^{\phantom{n}}$};
		\draw (273,158) node [anchor=north west][inner sep=0.75pt]    {$x^{\phantom{n}}$};
		\draw (395,158) node [anchor=north west][inner sep=0.75pt]    {$y^{\phantom{n}}$};
		\draw (556,158) node [anchor=north west][inner sep=0.75pt]    {$y^{\phantom{n}}$};

		\end{tikzpicture}
		\caption{An instance $(G,\id, I)$ constructed by Alice and Bob: the blue box corresponds to the set of nodes assigned to Alice, along with input $x$, those in the red box are the ones assigned to Bob, along with the input $y$ while the orange box containing a single node is assigned to the referee, whose input is fixed.}
		\label{fig:bipEQ}
	\end{figure}
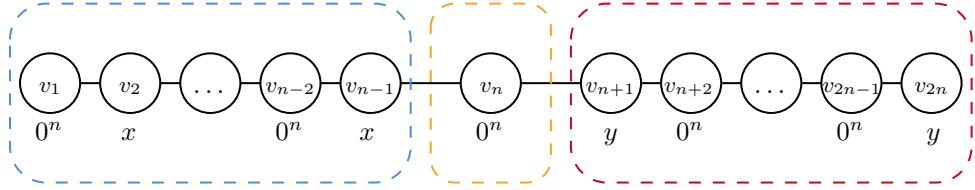

	Given the input $x$ for Alice and $y$ for Bob, they proceed to construct the instance $(G,\id,I)$: Alice takes the first $n$ nodes of $G$ while Bob takes the last $n$. Finally, the central node is assigned to the referee.
	
	For each $v\in G$, let $m(v)$ be the certificate that Melin sends to node $v$ according to protocol $\mathcal{P}$. In protocol $\mathcal{P}^*$, Alice receives from the prover the certificate $(m(v_n), m(v_{n+1}))$, and Bob receives the certificate $(m(v_{n+1}), m(v_{n+2}))$. Then, Alice proceeds to enumerate all possible certificates for the nodes $v_1, \dots, v_{n+1}$, along with all possible random messages that these vertices may generate, depending on each certificate and input. Likewise, Bob enumerates all possible certificates and random messages for nodes $v_{n+1}, \dots, v_{2n+1}$.
	
	Having simulated all possible interactions between the nodes in their section and the referee, Alice draws a random string $r_1$  and communicate the following to the referee the message $(\acc_A, s_n, m(v_{n+1}))$, where:
	\begin{itemize}
	\item $\acc_A\in \{0,1\}$ and equals $1$ if and only if all the vertices $v_1, \dots, v_n$ accept in protocol $\mathcal{P}$ for more than $1-\eps$ of all possible random bits; 	
	\item  $s_n$ is the message that node $v_n$ sends  to $v_{n+1}$ in the verification round of protocol $\mathcal{P}$, given the string $r_1$, the input $x$ and the certificate $m(v_n)$.
	\end{itemize}
	Analogously, Bob draws a random string $r_2$ and sends the ref the message $(\acc_B, s_{n+2}, m(v_{n+1}))$, such that  $\acc_B\in \{0,1\}$ and equals $1$ if and only if all the vertices $v_{n+2}, \dots, v_{2n+1}$ accept in protocol $\mathcal{P}$ for more than $1-\eps$ of all possible random bits; 	and $s_{n+2}$ is the message that $v_{n+2}$  sends to $v_{n+1}$ in the verification round of $\mathcal{P}$, given $r_2$, the input $y$ and the certificate $m(v_{n+2})$. We have that both Alice and Bob send $\cO(K)$ bits each.
	
	Having these messages, the referee verifies that $\acc_A = \acc_B = 1$ and that both send the same certificate for node $v_{n+1}$, rejecting if any of these fails. Then the referee draws a random string $r_3$ and simulates the verification round between the central node $v_{n+1}$, and nodes $v_{n}$ and $v_{n+2}$. Given the messages received from Alice and Bob, the referee has the messages that $v_{n+1}$ receives in the verification round of $\mathcal{P}$. Finally, the referee accepts if $v_n$ accepts. We now analyze the soundness and completeness of $\mathcal{P}^*$.
	\begin{itemize}
		\item \completeness If $x=y$ then $(G, \id, I)$ is a \emph{yes}-instance of $\twocoleq$. By the completeness of $\mathcal{P}$, all the nodes in $G$ accept with probability greater than $1-\eps$. This implies that Alice and Bob communicate $\acc_A = \acc_B = 1$ to the referee. It also implies that $v_{n+1}$ accepts. Therefore, the referee accepts with probability greater than $1-\eps$.
		
		\item \soundness In the case that $x\neq y$, we have that, by the correctness of $\mathcal{P}$, the probability that all nodes accept is strictly less than $\eps$. 
		
		Now consider $\textsf{A}$ to be the variable that equals 1 if Alice's portion of the graph accepts when $v_n$ receives $r_3$ from $v_{n+1}$ and $\textsf{B}$ be the variable that equals 1 if Bob's portion accepts given $r_3$. As $\acc_A = \acc_B = 1$ we have that

		\begin{align*}
		\textbf{Pr}( \text{The referee accepts}) &=\ \mathbf{Pr}(\textsf{A}, \textsf{B} \text{ and the referee accept}) \\ &+\ \mathbf{Pr}(\text{ The referee accepts and } \textsf{A}= 0 \text{ or } \textsf{B} = 0)\\
		&< \eps + 2 \eps
		\end{align*}
			
		As Alice and Bob each reject the protocol with probability less than  $\eps$. Therefore, with probability $1 -3\eps>\frac34$ we have that the referee rejects.
	\end{itemize}
Finally, as there is a gap between both acceptance probabilities, the error can be reduced by standard amplification.	We conclude that $\mathcal{P}^*$ is a protocol for  $\textsc{Equality}$ in the $\MA^{\sym}$ model with bandwidth $\cO(f(n))$. 
\end{proof}

\begin{theorem}
$\twocoleq \in \dMA_{\eps}^{\priv}[\Theta(\sqrt{n})]$ for any $\eps <\frac14$ and $\twocoleq \in \dMA_{1/3}^{\pub}[\Theta(\log n)]$.
\end{theorem}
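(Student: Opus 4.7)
The plan is to assemble the four bounds comprising this theorem from pieces already in hand. For the shared-randomness side, $\twocoleq \in \dMA^{\pub}_{1/3}[O(\log n)]$ is immediate from Lemma~\ref{twoColEqPub}, and the matching $\Omega(\log n)$ is the standard observation that the protocol already uses the field $\mathbb{F}_q$ with $q = \Theta(n^{c+2})$ so the bandwidth cannot be lowered below $\log n$ if one wishes to discriminate among $n$ evaluations. On the private-randomness side, the upper bound $\twocoleq \in \dMA^{\priv}_{\eps}[O(\sqrt{n})]$ is the content of the lemma immediately after Proposition~\ref{EQsym}, obtained by plugging Babai--Kimmel's $\mathsf{SM}$ protocol for \textsc{equality} into the verification step of each bipartition edge.

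The only remaining statement is the matching lower bound $\twocoleq \notin \dMA^{\priv}_{\eps}[o(\sqrt{n})]$ for $\eps < 1/4$. The route is a two-step reduction. First, I would apply Lemma~\ref{2colEQ}: any $\dMA^{\priv}_{\eps}$ protocol for $\twocoleq$ of bandwidth $f(n)$ yields an $\MA^{\sym}$ protocol for $\textsc{equality}$ on $n$-bit inputs whose total communication is $O(f(n))$. Second, I would invoke Theorem~\ref{lowerMAsym}, which delivers $f(n) = \Omega(\sqrt{\M^{\sym}(\textsc{equality})})$.

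The missing ingredient is then the classical fact that $\M^{\sym}(\textsc{equality}) = \Theta(n)$. A fooling-set / counting argument works: if Alice, Bob and the prover jointly used $o(n)$ bits, then by pigeonhole two distinct diagonal inputs $(x,x)$ and $(x',x')$ would be accepted via a common certificate $m$ and the same pair of deterministic messages $(\omega_x, \varphi_x) = (\omega_{x'}, \varphi_{x'})$; the referee would then be forced to accept the off-diagonal input $(x,x')$ as well, contradicting soundness. Chaining this with the square-root loss from Theorem~\ref{lowerMAsym} gives $f(n) = \Omega(\sqrt{n})$, as desired.

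The main obstacle I anticipate is merely bookkeeping in the reduction of Lemma~\ref{2colEQ}, where one must check that the "central-node" simulation by the referee (using a fresh random string $r_3$ independent of $r_1, r_2$) preserves the right acceptance gap; the constant-factor error inflation from $\eps$ to $3\eps$ is precisely why the statement is constrained to $\eps < 1/4$. Once this is in place, the square-root simulation of Theorem~\ref{lowerMAsym} and the $\Omega(n)$ non-deterministic lower bound for $\textsc{equality}$ combine cleanly to close the bound.
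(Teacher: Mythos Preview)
Your treatment of the private-randomness bounds and of the $O(\log n)$ upper bound for shared randomness matches the paper's proof essentially verbatim: the chain Lemma~\ref{2colEQ} $\to$ Theorem~\ref{lowerMAsym} $\to$ $\M^{\sym}(\eq)=\Theta(n)$ is exactly what the paper does (the paper cites \cite{kushilevitz1997communication} for the last step rather than giving the fooling-set argument, but yours is fine).

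There is, however, a genuine gap in your justification of the $\Omega(\log n)$ lower bound for $\dMA^{\pub}$. Saying that ``the protocol already uses the field $\mathbb{F}_q$ with $q=\Theta(n^{c+2})$ so the bandwidth cannot be lowered'' is not a lower bound: it only says that the \emph{particular} protocol of Lemma~\ref{twoColEqPub} spends $\Theta(\log n)$ bits, not that every protocol must. The paper instead argues by derandomization. A $\dMA^{\pub}_{\eps}[f(n)]$ protocol for $\twocoleq$ uses at most $2^{O(f(n))}$ shared random strings; enumerating them all (either via Theorem~\ref{thm:simula} after the trivial inclusion $\dMA^{\pub}[f]\subseteq\dAM^{\pub}[3,f]$, or directly inside the path reduction of Lemma~\ref{2colEQ}) yields a nondeterministic protocol for $\eq$ of cost $2^{O(f(n))}$. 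Since $\M^{\sym}(\eq)=\Theta(n)$, this forces $2^{O(f(n))}=\Omega(n)$, i.e.\ $f(n)=\Omega(\log n)$. You should replace your ``field size'' remark by this argument.
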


\begin{proof}
	Indeed, it is known that in the classic $2$-party communication model of Alice and Bob the problem $\eq$ has complexity $\Theta(n)$ even with the help of no-determinism \cite{kushilevitz1997communication}. This bound translates naturally to the simultaneous messages model, and so $\textsf{N}(\eq) = \Theta(n)$. From \cref{lowerMAsym} we deduce that any protocol in the model  $\textsf{MA}^{\textsf{sym}}$ for $\eq$ using random bits requires  $\Theta(\sqrt{n})$ bits. Now, set $\eps < 1/4$. If there exists a protocol $\mathcal{P}$ for \twocoleq\ using $o(\sqrt{n})$ bits and with and error smaller than $\eps$, then by \cref{2colEQ} there would exist a protocol $\mathcal{P}^*$ for $\eq$ in the model $\textsf{MA}^{\textsf{sym}}$ using $o(\sqrt{n})$ bits and error smaller than $1/3$, a contradiction.
	
Moreover, for every $\eps \leq 1/3$,  if $\twocoleq$ belongs to $\dMA_{\eps}^{\pub}[f(n)]$ then $f(n) = \Omega(\log n)$ as we can derandomize the protocol and it would contradict the bound for $\eq$.	Thus by \cref{twoColEqPub} we conclude that the protocol is tight.
\end{proof}


\section{Open Problems}

Besides the main questions regarding the actual power of  $\dAM$ and $\dMA$, and a general method to obtain lower-bounds on these models, this work leaves open several interesting research perspectives. First, Theorem \ref{thm:simula} shows that, if  $f = \Omega(\log n)$, then  $\dAM^{\pub}[f] \subseteq \dM[2^f]$. Is it possible to obtain such an inclusion even for $f=o(\log n)$? 
We were able to show that this is indeed the case when the  interactive proof is restricted to randomized protocols with low error
(see Theorem \ref{thm:whp}), but it is unclear whether this holds in general. 

A second natural question is about the maximum gap between $\dAM$ protocols with private and shared randomness. 
More precisely, is there a language  contained in both  $\dAM^{\pub}[\Omega(n)]$ and  $\dAM^{\priv}[\cO(1)]$?

Finally, we can consider another variant of the model, which combines the power of shared and private randomness on any round. Namely, a model where the nodes use private randomness to interact with the prover and shared randomness in the verification round. How powerful is this model regarding the one with only private coins?

\bibliography{i-ref}

\end{document}